\documentclass[11pt]{article}             
\usepackage{osid}                         %

\usepackage{amsmath,amssymb,amscd,amsthm,mathrsfs}
\usepackage{bbm}
\usepackage{graphicx}
\usepackage{enumerate} 
\usepackage{color,xcolor}
\usepackage{hyperref}
\usepackage{url}




\theoremstyle{plain}
\newtheorem{lemma}{Lemma}
\newtheorem{thm}{Theorem}

\newtheorem{corollary}{Corollary}
\newtheorem{proposition}{Proposition}
\newtheorem{workflow}{Workflow}

\theoremstyle{definition}

\newtheorem{remark}{Remark}

\newcommand{\tr}{{\rm tr}}

\title{$k$-Positive Maps: New Characterizations and a Generation Method}
\author{Frederik vom Ende
	\\[1mm]{\footnotesize\it Dahlem Center for Complex Quantum
Systems, Freie Universität Berlin, Arnimallee 14, 14195 Berlin, Germany
 \& {frederik.vomende@gmail.com}}\\[2ex]
Sumeet Khatri
	\\[1mm]{\footnotesize\it Department of Computer Science, Virginia Tech, Blacksburg, VA 24061, USA and\\
    Virginia Tech Center for Quantum Information Science and Engineering, Blacksburg, VA 24061, USA
    \& {skhatri@vt.edu}}\\[2ex]	
Sergey Denisov
	\\[1mm]{\footnotesize\it Department of Computer Science, OsloMet -- Oslo Metropolitan University, P.O. Box 4 St. Olavs plass, N-0130 Oslo, Norway \& {sergiy.denysov@oslomet.no}}
}

\AtBeginDocument{
  \label{CorrectFirstPageLabel}
  
}

\begin{document}

\maketitle
\begin{abstract}
We study $k$-positive linear maps on matrix algebras and address two problems,  (i)  characterizations of $k$-positivity and (ii) generation of non-decomposable $k$-positive maps. On the characterization side, we derive optimization-based conditions equivalent to $k$-positivity that (a) reduce to a simple check when $k=d$, (b) reveal a direct link to the spectral norm of certain order-3 tensors (aligning with known NP-hardness barriers for $k<d$), and (c) recast $k$-positivity as a novel optimization problem over separable states, thereby connecting it explicitly to separability testing. On the generation side, we introduce a Lie-semigroup-based method that, starting from a single $k$-positive map, produces one-parameter families that remain $k$-positive and non-decomposable for small enough times. We illustrate this by generating such families for $d=3$ and $d=4$. We also formulate a semi-definite program (SDP) to test an equivalent form of the positive partial transpose (PPT) square conjecture (and do not find any violation of the latter). Our results provide practical computational tools for certifying $k$-positivity and a systematic way to sample $k$-positive non-decomposable maps.
\end{abstract}

\section{Introduction}

Positive linear maps lie at the heart of quantum information theory. They underpin entanglement witnesses and the separability problem \cite{HHHH07,GT09,CS14}, the divisibility of quantum dynamics \cite{Plenio_NMarkov_Review2014,Breuer16,LHW18}, as well as entanglement-breaking channels \cite{HSR03}, \cite[Ch.~6.4]{Holevo12_2nd_ed} (e.g., through certain formulations of the PPT-square conjecture, cf.~below).

However, positivity of a linear map is notoriously more difficult to characterize than the strictly stronger notion of complete positivity \cite{Woronowicz76,Stormer13}, with the latter being famously characterized by the positive semi-definiteness of the associated Choi matrix \cite{Choi75}. Recalling that a bipartite quantum state $\rho$ is separable if and only if $({\rm id}\otimes\Phi)(\rho)\geq 0$ for all positive maps $\Phi$ \cite{HHH96,Stormer86}, this characterization difficulty is also reflected in the fact that the separability decision problem is NP-hard \cite{Gurvits03,Gharibian10}. This may explain why, to our knowledge, there are currently only a few reliable ways to numerically test whether a state is separable, whether an operator is an entanglement witness, or whether a map is positive -- let alone numerical methods to generate witnesses or positive but not completely positive maps (a notable exception is the recent work on numerical witness construction \cite{LHPD25}).

In addition to the absence of any reliable method for generating non-decomposable entanglement witnesses 
(resp., non-decomposable $k$-positive maps), testing conjectures that rely on these notions---such as the PPT-square conjecture or the Sanpera--Bru\ss--Lewenstein conjecture---is challenging. To clarify what this entails and to motivate our work, we briefly review both.

The PPT-square conjecture (attributed to Christandl and first formulated in a 2012 workshop report \cite{RuskaiEtAl2012}) states that the composition of two maps that are completely positive and completely co-positive (i.e., whose composition with the transpose map is completely positive) is always entanglement breaking. 
Although the conjecture has been proved in many special cases, including the $d=3$ case~\cite{CMW19,CYT19}, in asymptotic \cite{KMP18,RJP18} and random regimes \cite{CYZ18,NP25}, as well as for all Choi-type maps \cite{SN22} and Gaussian channels \cite{CMW19}, the question of its validity in general case remains open.

This conjecture fits our setting via an equivalent formulation: The PPT-square conjecture holds if and only if, for any completely positive and completely co-positive map $\Delta$ and any positive map $\Psi$, the composition $\Psi\circ\Delta$ is decomposable \cite[Conjecture~IV.3]{CMW19}.\footnote{A map $\Phi$ is called decomposable if there exist completely positive maps $\Phi_1,\Phi_2$ such that $\Phi=\Phi_1+\Phi_2\circ T$, where $T$ denotes the transpose map. Otherwise, $\Phi$ is called non-decomposable.} Notably, only \emph{non-decomposable} positive maps $\Psi$ are relevant here, because if $\Psi$ is decomposable then $\Psi\circ\Delta$ is trivially decomposable (as $\Delta$ is completely positive by assumption).
Thus, numerically testing this formulation of the PPT-square conjecture requires a method to generate non-decomposable positive maps. Moreover, since a potential---but so far unverified---counterexample to PPT-square has recently been proposed \cite[Ex.~7]{JY20}, this seems like a promising path forward.

A strengthened form of the Sanpera--Bru\ss--Lewenstein (SBL) conjecture asserts that, in dimension $d$, the Choi matrix of every map that is both completely positive and completely co-positive (i.e., PPT) has Schmidt number at most $d{-}1$~\cite{SBL01,CMW19}. By duality of cones, this is equivalent to the statement that every $(d{-}1)$-positive map on $d$-dimensional matrices is decomposable~\cite{Yang16}. This equivalence is known to hold in $d=3$. 
An explicit counterexample in dimension $d\ge 4$ (either a PPT Choi matrix with Schmidt number $d$, or, equivalently, a $(d{-}1)$-positive map that is non-decomposable) would falsify the SBL-type statement. Such an example could also lead to a counterexample to the PPT-square conjecture, since a PPT state of full Schmidt number is far from being entanglement breaking~\cite{MH25_priv}.

All of the above considerations lead to the problem of devising methods to sample non-decomposable $k$-positive maps and, more fundamentally, of identifying conditions that make testing $k$-positivity numerically as simple as possible -- despite its likely NP-hardness for $k<d$. This is our motivation.

The paper is structured as follows. In Section~\ref{sec_prelim} we recap known characterizations of $k$-positivity, including a recent characterization by Marciniak et al.~\cite{MOM25}. Building on this, Section~\ref{sec_mainres} develops new characterizations of $k$-positivity via several optimization problems. These yield new connections to the problem of computing the spectral norm of 3-tensors (Section~\ref{sec_3tensor}) and to the separability problem (Section~\ref{sec_newconnect_ent}). In Section~\ref{sec_newnondec} we introduce a Lie-semigroup-based method that, starting from a given $k$-positive map satisfying suitable conditions, generates further $k$-positive maps. We demonstrate the method by numerically producing one-parameter families of non-decomposable, positive maps in dimensions $d=3$ and $d=4$. We then use this new method to test the previously mentioned formulation of the PPT-square conjecture involving non-decomposable positive maps in Section~\ref{sec_ppt2}. In fact, given such a map, we formulate a test for counterexamples to the conjecture as a semi-definite program (Eq.~\eqref{eq:PPT2_SDP}). 
Finally, we conclude in Section~\ref{sec_outlook}.

\section{Recap: $k$-positivity and its characterizations}\label{sec_prelim}

Let $\mathcal L(\mathbb C^{m\times m},\mathbb C^{d\times d})$ denote the set of all linear maps from $\mathbb C^{m\times m}$ to $\mathbb C^{d\times d}$. Recall the following definition: $\Phi\in\mathcal L(\mathbb C^{m\times m},\mathbb C^{d\times d})$ is called $k$-positive if for all $A$ positive semi-definite (written: $A\geq 0$) it holds that $({\rm id}_k\otimes\Phi)(A)\geq 0$, as well. This property is known to be encoded in the Choi matrix $$\mathsf C(\Phi):= ({\rm id}\otimes\Phi)(|\Gamma\rangle\langle\Gamma|) =\sum_{j,k=1}^m|j\rangle\langle k|\otimes\Phi(|j\rangle\langle k|)$$ (where $|\Gamma\rangle:=\sum_j|j\rangle\otimes|j\rangle$), as well as in the representation matrix $\widehat\Phi$ (i.e., the unique matrix such that ${\rm vec}(\Phi(A))=\widehat\Phi{\rm vec}(A)$ for all $A$ where ${\rm vec}(A):=\sum_j |j\rangle\otimes A|j\rangle$ is usual column-vectorization \cite{HS81,HJ2}) in the following way:
\begin{proposition}\label{prop_CP_P_char}
Given any $\Phi\in\mathcal L(\mathbb C^{m\times m},\mathbb C^{d\times d})$, $k,m,d\in\mathbb N$ the following statements are equivalent.
\begin{itemize}
    \item[(i)] $\Phi$ is $k$-positive
    \item[(ii)] $\langle\psi|\mathsf C(\Phi)|\psi\rangle\geq 0$ for all $\psi\in\mathbb C^m\otimes\mathbb C^d$ of Schmidt rank\footnote{i.e., the number of non-vanishing coefficients in the Schmidt decomposition.} at most $k$.
    \item[(iii)] $\langle\overline{X}\otimes X,\widehat\Phi\rangle_{\rm HS}\geq 0$ for all $X\in\mathbb C^{m\times d}$ with rank at most $k$. Here $\langle A,B\rangle_{\rm HS}:=\tr(A^\dagger B)$ is the usual Hilbert-Schmidt inner product.
    \item[(iv)] $\tr(\Phi(X(\cdot)X^\dagger))\geq 0$ for all $X\in\mathbb C^{m\times d}$ with rank at most $k$, with ${\rm tr}:=\sum_j\langle G_j,(\cdot)(G_j)\rangle_{\rm HS}$ the usual trace of a linear map.
    \item[(v)] $\tr(\Psi^\dagger\Phi)\geq 0$ for all $\Psi\in\mathcal L(\mathbb C^{m\times m},\mathbb C^{d\times d})$ completely positive such that $\Psi$ admits a set of Kraus operators of rank at most $k$.
\end{itemize}
\end{proposition}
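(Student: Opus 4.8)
The plan is to take (i)$\Leftrightarrow$(ii) as the analytic core and then show that (ii), (iii) and (iv) are three incarnations of one and the same sesquilinear/linear functional, written respectively in the Choi picture, the representation-matrix picture, and the superoperator-trace picture; the translation between them is bookkeeping with the vectorization map together with the elementary dictionary ``Schmidt rank of $\psi$ $=$ rank of its coefficient matrix $X$''. The equivalence (iv)$\Leftrightarrow$(v) is then obtained by Kraus decomposition and Hilbert--Schmidt duality of superoperators.

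For (i)$\Leftrightarrow$(ii) I would first reduce $k$-positivity to rank-one test inputs: since every $A\geq 0$ on $\mathbb C^{k\times k}\otimes\mathbb C^{m\times m}$ is a sum of terms $|v\rangle\langle v|$ with $v\in\mathbb C^k\otimes\mathbb C^m$, and positivity of the output is tested against $\phi\in\mathbb C^k\otimes\mathbb C^d$, $\Phi$ is $k$-positive iff $\langle\phi|({\rm id}_k\otimes\Phi)(|v\rangle\langle v|)|\phi\rangle\geq 0$ for all such $v,\phi$. Expanding $v=\sum_{\alpha=1}^k|\alpha\rangle\otimes v_\alpha$ and $\phi=\sum_{\alpha=1}^k|\alpha\rangle\otimes\phi_\alpha$ and using $({\rm id}_k\otimes\Phi)(|v\rangle\langle v|)=\sum_{\alpha,\beta}|\alpha\rangle\langle\beta|\otimes\Phi(|v_\alpha\rangle\langle v_\beta|)$, this functional becomes $\sum_{\alpha,\beta}\langle\phi_\alpha|\Phi(|v_\alpha\rangle\langle v_\beta|)|\phi_\beta\rangle$. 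Writing $|v_\alpha\rangle=\sum_i c_i^\alpha|i\rangle$, $|\phi_\alpha\rangle=\sum_a d_a^\alpha|a\rangle$ and defining $X:=\sum_{\alpha=1}^k c^\alpha(\overline{d^\alpha})^{T}\in\mathbb C^{m\times d}$, a direct computation identifies this functional with $\langle\psi|\mathsf C(\Phi)|\psi\rangle$ for $\psi=\sum_{i,a}\overline{X_{ia}}\,|i\rangle\otimes|a\rangle$. The crux -- and the step I expect to be the main obstacle -- is the rank count: because the ancilla is $k$-dimensional, $X$ is a sum of at most $k$ rank-one matrices, hence has rank at most $k$, so $\psi$ has Schmidt rank at most $k$; conversely every Schmidt-rank-$\leq k$ vector arises from some $v,\phi$ through its singular value decomposition. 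This two-way rank bookkeeping is exactly what converts the ``all PSD inputs on a $k$-dimensional ancilla'' condition into the ``Schmidt rank $\leq k$'' condition.

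For (ii)$\Leftrightarrow$(iv)$\Leftrightarrow$(iii) I would compute all three functionals in the matrix-unit basis. With the dictionary $\psi=\sum_{i,a}X_{ia}|i\rangle\otimes|a\rangle$ (so Schmidt rank of $\psi$ equals rank of $X$) one finds $\langle\psi|\mathsf C(\Phi)|\psi\rangle=\sum_{i,i',a,a'}\overline{X_{ia}}X_{i'a'}\langle a|\Phi(|i\rangle\langle i'|)|a'\rangle$, whereas choosing the $G_j$ to be the matrix units of $\mathbb C^{d\times d}$ gives $\tr(\Phi(X(\cdot)X^\dagger))=\sum_{i,i',a,a'}X_{ia}\overline{X_{i'a'}}\langle a|\Phi(|i\rangle\langle i'|)|a'\rangle$. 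These two agree after $X\mapsto\overline X$, and since conjugation preserves rank, quantifying either over all rank-$\leq k$ matrices yields the same condition, giving (ii)$\Leftrightarrow$(iv). For (iii)$\Leftrightarrow$(iv) I would use the vectorization identity $\vec{XAX^\dagger}=(\overline X\otimes X)\vec{A}$, i.e.\ the representation matrix of the CP map $A\mapsto XAX^\dagger$ is $\overline X\otimes X$; since the superoperator trace equals the trace of its representation matrix and representation matrices multiply under composition, $\tr(\Phi(X(\cdot)X^\dagger))=\tr\big(\widehat\Phi\,(\overline X\otimes X)\big)$, which is exactly the Hilbert--Schmidt functional $\langle\overline X\otimes X,\widehat\Phi\rangle_{\rm HS}$ of (iii) (up to the innocuous conjugation fixed by the chosen vectorization convention), with the rank constraint on $X$ again matching on both sides.

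Finally, for (iv)$\Leftrightarrow$(v) I would use the Hilbert--Schmidt adjoint of superoperators together with cyclicity of the superoperator trace. For a single-Kraus CP map $\Psi=K(\cdot)K^\dagger$ with $K\in\mathbb C^{d\times m}$ one has $\Psi^\dagger=K^\dagger(\cdot)K$, so setting $X:=K^\dagger\in\mathbb C^{m\times d}$ (which has $\mathrm{rank}\,X=\mathrm{rank}\,K$) gives $\Psi^\dagger=X(\cdot)X^\dagger$ and hence $\tr(\Psi^\dagger\Phi)=\tr(\Phi\circ\Psi^\dagger)=\tr(\Phi(X(\cdot)X^\dagger))$. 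Thus (v) specialized to single-Kraus maps is precisely (iv). For the general direction I would decompose any CP map with Kraus rank at most $k$ as $\Psi=\sum_i K_i(\cdot)K_i^\dagger$ with each $\mathrm{rank}\,K_i\leq k$, whence $\tr(\Psi^\dagger\Phi)=\sum_i\tr(\Phi(X_i(\cdot)X_i^\dagger))$ with $X_i=K_i^\dagger$ of rank $\leq k$; this is a nonnegative sum under (iv), giving (iv)$\Rightarrow$(v), while the single-Kraus specialization gives (v)$\Rightarrow$(iv). No step here is hard; the only care needed is to match the rank of each Kraus operator with the rank of the corresponding $X$.
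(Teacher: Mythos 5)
Your proposal is correct, and it is genuinely more self-contained than the paper's proof, though it routes through the five conditions differently. The paper's Appendix~A outsources both non-trivial equivalences to the literature: (i)$\Leftrightarrow$(ii) is quoted from \cite{CK09}, and (iii)$\Leftrightarrow$(iv) from \cite{vE24_decomp_findim}; the only step carried out there in detail is (ii)$\Leftrightarrow$(iii), proved by decomposing the \emph{map} as $\Phi=\sum_i A_i(\cdot)B_i^\dagger$ and invoking the elementary-map formulas $\mathsf C(A(\cdot)B^\dagger)=|\vec{A}\rangle\langle\vec{B}|$ and $\widehat{A(\cdot)B^\dagger}=\overline B\otimes A$. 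You instead prove everything from first principles: your rank-one-input/ancilla-expansion/SVD argument for (i)$\Leftrightarrow$(ii) is in effect a proof of the result the paper cites from \cite{CK09}, and you chain the middle conditions in the order (ii)$\Leftrightarrow$(iv)$\Leftrightarrow$(iii) (a matrix-unit computation, then the vectorization identity $\vec{XAX^\dagger}=(\overline X\otimes X)\vec{A}$ combined with $\widehat{\Phi\circ\Psi}=\widehat\Phi\,\widehat\Psi$ and the fact that the trace of a superoperator equals the trace of its representation matrix), whereas the paper goes (ii)$\Leftrightarrow$(iii)$\Leftrightarrow$(iv); the final step (iv)$\Leftrightarrow$(v), Kraus decomposition plus linearity, is identical in both. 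What your version buys is independence from external references and fully explicit rank bookkeeping (Schmidt rank of $\psi$ $=$ rank of its coefficient matrix $=$ rank of the associated Kraus operator); what it costs is length. One shared blemish: for $m\neq d$ condition (iii) as literally stated does not typecheck ($\overline X\otimes X$ has shape $m^2\times d^2$ while $\widehat\Phi$ has shape $d^2\times m^2$), and your ``innocuous conjugation'' remark glosses over this exactly as the paper's statement does; the clean identity is $\tr(\Phi(X(\cdot)X^\dagger))=\langle\overline{X^\dagger}\otimes X^\dagger,\widehat\Phi\rangle_{\rm HS}$, and since $X\mapsto X^\dagger$ is a rank-preserving bijection between $\mathbb C^{m\times d}$ and $\mathbb C^{d\times m}$, the quantified equivalence is unaffected.
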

For the sake of completeness, a proof can be found in Appendix~A. More importantly, some remarks are in order:
\begin{remark}\label{rem1}
\begin{itemize}
\item[(i)] Condition~(ii) for $k=1$ is also known as ``block positivity'' of $\mathsf C(\Phi)$, i.e., $\langle\psi\otimes\phi|\mathsf C(\Phi)|\psi\otimes\phi\rangle\geq 0$ for all $\psi,\phi$ as first found by Jamio\l{}kowski \cite{Jamiolkowski72}. This condition can equivalently be rephrased using the so-called positivity polynomial of $\mathsf C(\Phi)$, which can be checked using Renegar's quantifier elimination technique \cite[Algorithm~5.7]{PSJ22}, \cite{PSJ21}. As is well-known from complexity results in real algebraic geometry, this algorithm often has at least exponential---and in many cases even doubly exponential---runtime in the worst case \cite[Ch.~14]{BPR06}.
    \item[(ii)] In the case of one qubit it is known that the Choi matrix of a positive map can have at most one negative eigenvalue \cite{STV98}. For general bipartite systems it has been shown that the partial transpose of states on $\mathbb C^m\otimes\mathbb C^d$ can have at most $(m-1)(d-1)$ negative eigenvalues \cite{Rana13,Johnston13}, which reduces to the previously mentioned qubit result when $m=d=2$. Beyond this, we are not aware of any further eigenvalue-sign constraints on the Choi matrix of $k$-positive maps for $1<k<d$.
    \item[(iii)] Condition~(v) in the previous proposition shows that the cone of $k$-positive maps is dual to the cone of completely positive maps which admit a set of rank-$k$ Kraus operators. For $k=d$ this reproduces the known result that the cone of completely positive maps is self-dual \cite[Ch.~11.2]{Bengtsson17}, while for $k=1$ this shows that the positive maps are dual to the entanglement-breaking channels \cite[Prop.~4.14]{KW20}.
    \item[(iv)] A simple \textit{sufficient} way to check positivity is to check for complete positivity after applying the inverse of a given positive, bijective map
    \cite{Mozrzymas15,ChruscinskiComment16}.
    After a slight reformulation and generalization this result reads as follows: Given $\Phi,\Psi\in\mathcal L(\mathbb C^{d\times d})$ such that $\Phi$ is positive and bijective, then complete positivity of $\Phi^{-1}\circ\Psi$ (which is easy to check) implies that $\Psi$ is positive (simply because $\Psi=\Phi\circ(\Phi^{-1}\circ\Psi)$ positive by assumption). 
    The example used in the cited papers is the reduction map $R(X):=\tr(X){\bf1}-X$ with inverse $R^{-1}(X)=\frac{1}{d-1}\tr(X){\bf1}-X$.
\end{itemize}
\end{remark}

A different perspective on characterizing $k$-positivity was proposed by Marciniak et al.~\cite[Thm.~3.2]{MOM25}. Instead of testing the Choi matrix $\mathsf C(\Phi)$ itself, their method considers a shifted---hence necessarily completely positive---variant of $\Phi$, in the spirit of St{\o}rmer’s construction~\cite{Stormer11}.

\begin{proposition}\label{prop_Marcin}
 Given any $d\in\mathbb N$, $k\in\{1,\ldots,d\}$, and $\Phi\in\mathcal L(\mathbb C^{d\times d})$ Hermitian-preserving\footnote{i.e., if $A$ is Hermitian, then so is $\Phi(A)$. Equivalently, $\mathsf C(\Phi)$ is Hermitian.} the following statements are equivalent.
 \begin{itemize}
 \item[(i)] $\Phi$ is $k$-positive
 \item[(ii)]
 \begin{equation}\label{eq:Marcin1}
 \max( \sigma( \mathsf C(\Phi) ) )\geq \max_{x\in\mathbb C^r,\|x\|=1}\Big\| \Big(\sum_ix_iK_i\Big)\Big(\sum_jx_jK_j\Big)^\dagger \Big\|_{(k)}
 \end{equation}
 where $\sigma( \mathsf C(\Phi) )$ is the spectrum of the Choi matrix of $\Phi$, $\|\cdot\|_{(k)}$ is the $k$-th Ky Fan norm (i.e., the sum of the $k$ largest singular values),
 $r$ is the Kraus rank of $\max( \sigma( \mathsf C(\Phi) ) )\tr(\cdot){\bf1}-\Phi$, and $\{K_j\}_{j=1}^r$ is any set of Kraus operators of $\max( \sigma( \mathsf C(\Phi) ) )\tr(\cdot){\bf1}-\Phi$.
 \end{itemize}
\end{proposition}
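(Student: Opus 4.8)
The plan is to realize $\Phi$ as a shift of a completely positive map and then translate the Schmidt-rank characterization of Proposition~\ref{prop_CP_P_char}(ii) into the claimed norm inequality. Write $\lambda:=\max(\sigma(\mathsf C(\Phi)))$ and set $\Psi:=\lambda\,\tr(\cdot)\mathbf 1-\Phi$. Since $\mathsf C(\tr(\cdot)\mathbf 1)=\mathbf 1$, one has $\mathsf C(\Psi)=\lambda\mathbf 1-\mathsf C(\Phi)\geq 0$, so $\Psi$ is completely positive; let $\{K_j\}_{j=1}^r$ be any Kraus operators, for which $\mathsf C(\Psi)=\sum_{j=1}^r\vec{K_j}\,\vec{K_j}^\dagger$. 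By Proposition~\ref{prop_CP_P_char}(ii), $\Phi$ is $k$-positive if and only if $\langle\psi|\mathsf C(\Phi)|\psi\rangle\geq 0$ for every $\psi$ of Schmidt rank at most $k$; substituting $\mathsf C(\Phi)=\lambda\mathbf 1-\mathsf C(\Psi)$ and normalizing $\|\psi\|=1$, this is equivalent to
\begin{equation*}
\lambda\ \geq\ F:=\max_{\|\psi\|=1,\ \mathrm{Sr}(\psi)\leq k}\langle\psi|\mathsf C(\Psi)|\psi\rangle,
\end{equation*}
where $\mathrm{Sr}(\psi)$ denotes the Schmidt rank. It therefore remains to show that $F$ equals the right-hand side of \eqref{eq:Marcin1}.

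Next I would rewrite $F$ over matrices. Every $\psi$ of Schmidt rank at most $k$ equals $\vec{Y}$ for some $Y\in\mathbb C^{d\times d}$ with $\mathrm{rank}(Y)\leq k$, and $\|\psi\|=\|Y\|_{\rm HS}$. Using $\langle\vec{K_j}|\vec{Y}\rangle=\langle K_j,Y\rangle_{\rm HS}$ gives $\langle\psi|\mathsf C(\Psi)|\psi\rangle=\sum_{j}|\langle K_j,Y\rangle_{\rm HS}|^2$, hence
\begin{equation*}
F=\max_{\mathrm{rank}(Y)\leq k,\ \|Y\|_{\rm HS}=1}\ \sum_{j=1}^r|\langle K_j,Y\rangle_{\rm HS}|^2.
\end{equation*}
Because this depends only on $\mathsf C(\Psi)$, it is automatically independent of the chosen Kraus set, which also justifies the phrase ``any set of Kraus operators'' in the statement.

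The crux is to match $F$ with $G:=\max_{\|x\|=1}\|K(x)K(x)^\dagger\|_{(k)}$, where $K(x):=\sum_i x_iK_i$. I would factor each rank-$\leq k$ matrix as $Y=PB$ with $P\in\mathbb C^{d\times k}$ an isometry ($P^\dagger P=\mathbf 1_k$) and $B\in\mathbb C^{k\times d}$, so that $\|Y\|_{\rm HS}=\|B\|_{\rm HS}$; writing $L_j:=P^\dagger K_j$ turns $\langle K_j,Y\rangle_{\rm HS}$ into $\langle L_j,B\rangle_{\rm HS}$. Carrying out the inner maximization over $B$ yields the largest eigenvalue of the $r\times r$ Gram matrix $D(P)$ with entries $D(P)_{ij}=\langle L_i,L_j\rangle_{\rm HS}=\tr(K_i^\dagger PP^\dagger K_j)$, so $F=\max_{P^\dagger P=\mathbf 1_k}\lambda_{\max}(D(P))$. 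For $G$, Ky Fan's maximum principle gives $\|K(x)K(x)^\dagger\|_{(k)}=\max_{P^\dagger P=\mathbf 1_k}\tr(P^\dagger K(x)K(x)^\dagger P)=\max_{P}\|P^\dagger K(x)\|_{\rm HS}^2$, and expanding $P^\dagger K(x)=\sum_i x_iL_i$ shows $\|P^\dagger K(x)\|_{\rm HS}^2=\langle x|D(P)|x\rangle$ with the \emph{same} $D(P)$. Swapping the two maxima then gives $G=\max_{P}\lambda_{\max}(D(P))=F$, which closes the equivalence.

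The main obstacle is precisely this last identification: recognizing that the rank constraint on $Y$ is encoded by an isometry $P$ and that, after this reduction, the Kraus-coefficient vector $x$ and the residual matrix $B$ play interchangeable roles governed by one and the same Gram matrix $D(P)$. The delicate points are choosing the correct one-sided factorization of a rank-$\leq k$ matrix, invoking Ky Fan in its variational form, and keeping track of the vec-convention so that Schmidt-rank-$\leq k$ vectors correspond exactly to rank-$\leq k$ matrices and $\langle\vec{K_j}|\vec{Y}\rangle=\langle K_j,Y\rangle_{\rm HS}$ holds as stated. None of these obstruct the argument, but they are where an error would most easily creep in; in particular, routing $G$ through $\|P^\dagger K(x)\|_{\rm HS}^2$ (rather than $\|K(x)^\dagger P\|_{\rm HS}^2$) is what makes the \emph{same} $D(P)$ appear on both sides and avoids an otherwise spurious complex-conjugation mismatch.
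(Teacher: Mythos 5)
Your proof is correct, but it cannot be "the same as the paper's," because the paper contains no proof of Proposition~\ref{prop_Marcin} at all: it imports the statement from Marciniak et al.~\cite[Thm.~3.2]{MOM25}. Your argument is therefore a self-contained replacement for that citation, and it checks out in every step: the shift $\Psi=\lambda\tr(\cdot)\mathbf 1-\Phi$ with $\mathsf C(\Psi)=\lambda\mathbf 1-\mathsf C(\Phi)\geq 0$ is sound; the reduction of $k$-positivity to $\lambda\geq F$ via Proposition~\ref{prop_CP_P_char}(ii) is legitimate because, under the paper's vec-convention, Schmidt-rank-$\leq k$ unit vectors are exactly $\vec{Y}$ with $\mathrm{rank}(Y)\leq k$, $\|Y\|_{\rm HS}=1$, and $\langle \vec{K_j},\vec{Y}\rangle=\tr(K_j^\dagger Y)$; the factorization $Y=PB$ with $P^\dagger P=\mathbf 1_k$ parametrizes precisely the rank-$\leq k$ matrices; the inner maximum over $B$ is $\lambda_{\max}$ of the Gram operator $B\mapsto\sum_j L_j\langle L_j,B\rangle_{\rm HS}$, whose nonzero spectrum is that of $D(P)_{ij}=\tr(K_i^\dagger PP^\dagger K_j)$; Ky Fan's maximum principle (the isometry version of Lemma~\ref{lemma_char_KyFan}) gives $\|K(x)K(x)^\dagger\|_{(k)}=\max_P\langle x|D(P)|x\rangle$ with the same $D(P)$; and swapping the two maxima is harmless. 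Your side remarks are also fine: $F$ depends only on $\mathsf C(\Psi)$, which handles the ``any set of Kraus operators'' clause, and the $\|P^\dagger K(x)\|_{\rm HS}$ versus $\|K(x)^\dagger P\|_{\rm HS}$ worry is moot since the two coincide. One structural observation worth making: since $\vec{PB}=(\mathbf 1\otimes P)\vec{B}$, your inner maximization over $B$ computes $\|(\mathbf 1\otimes Q)\,\mathsf C(\Psi)\,(\mathbf 1\otimes Q)\|_\infty$ for the rank-$k$ projection $Q=PP^\dagger$, i.e., your isometry/Gram-matrix maneuver reconstructs exactly the compression quantity that the paper attributes to~\cite{MOM25} and explicitly says it avoids in its own proofs of the related Theorems~\ref{thmmain} and~\ref{thm1a}. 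So your route is in the spirit of the original reference rather than of the paper's later reformulations; what it buys is a short, elementary, fully self-contained derivation (only Proposition~\ref{prop_CP_P_char}(ii) plus Ky Fan's variational principle), whereas the paper's Theorem~\ref{thmmain} takes Proposition~\ref{prop_Marcin} as its starting point and works in the opposite direction.
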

Here, some remarks are in order.
\begin{remark}\label{rem_marc}
    \begin{itemize}
    \item[(i)] The maximization~\eqref{eq:Marcin1} is done over the $r$-dimensional unit sphere, where, generically, $r=d^2-1$. Ideally, for the purpose of numerics one may want an optimization problem where the feasible set has dimension $O(d)$, or a problem where the feasible set has nicer topological properties (e.g., convex, compact).
    \item[(ii)] The function to be maximized in~\eqref{eq:Marcin1} is the norm (i.e., non-smooth) of a function which is quadratic in $x$. This raises the question whether there exist other such characterizations where the function inside the norm is linear, or where the entire objective function is linear (or at least smooth).
    \item[(iii)]
    Unlike Prop.~\ref{prop_CP_P_char}, condition~\eqref{eq:Marcin1} does not---or at least not obviously---reduce to a ``simple'' problem if $k=d$. This begs the question whether there are characterizations of this form where the corresponding optimization problem trivializes when checking for complete positivity.
    \item[(iv)]
    Connected to the previous point, \eqref{eq:Marcin1} can be re-written slightly in the following way when defining $V:=\sum_iK_i\otimes|i\rangle$:
    \begin{align*}
\sum_ix_iK_i&=\sum_iK_i\otimes\overline{\langle x|i\rangle}=\sum_iK_i\otimes\langle \overline{x}|i\rangle\\
&=({\bf1}\otimes\langle \overline x|)\Big(\sum_iK_i\otimes|i\rangle\Big)=({\bf1}\otimes\langle \overline x|)V\,,
    \end{align*}
i.e., the maximum in~\eqref{eq:Marcin1} equals
$\max_{x\in\mathbb C^r,\|x\|=1}\| ({\bf1}\otimes\langle  x|)VV^\dagger ({\bf1}\otimes|  x\rangle) \|_{(k)}$.
For usual positivity ($k=1$) this reduces further to
\begin{equation*}
    \max_{x\in\mathbb C^r,\|x\|=1}\| ({\bf1}\otimes\langle  x|)VV^\dagger ({\bf1}\otimes|  x\rangle) \|_\infty=\max_{x\in\mathbb C^r,\|x\|=1}\| V^\dagger ({\bf1}\otimes|  x\rangle) \|_\infty^2
\end{equation*}
using the $C^*$-identity $\|AA^\dagger\|_\infty=\|A\|_\infty^2$.
Hence the Hermitian-preserving map $\Phi$ is positive if and only if
\begin{equation}\label{eq:Marc_1_3}
    \max_{x\in\mathbb C^r,\|x\|=1}\| V^\dagger ({\bf1}\otimes|  x\rangle) \|_\infty\leq\sqrt{\max( \sigma( \mathsf C(\Phi) ) )}\,.
\end{equation}
While this turns the function inside the norm into a linear function, this trick does not work beyond usual positivity, i.e., it breaks down as soon as $k\geq 2$ because of $C^*$-identity does not generalize to higher Ky Fan-norms.
    \end{itemize}
\end{remark}
With this in mind, our next goal is to explore other conditions which are equivalent to~\eqref{eq:Marcin1} which perhaps amend or improve upon some of these points.

\section{Novel characterizations of $k$-positivity}\label{sec_mainres}
\subsection{First condition: Collapse for $k=d$}
In order to understand what happens to the maximum in~\eqref{eq:Marcin1} for the special case $k=d$ we need the following characterization of the Ky Fan-norm, a proof of which can be found in Appendix~B.
\begin{lemma}\label{lemma_char_KyFan}
 Given $d\in\mathbb N$, $k\in\{1,\ldots,d\}$, and $B\in\mathbb C^{d\times d}$ positive semi-definite it holds that
 $
 \|B\|_{(k)}=\max_{\substack{M\in\mathbb C^{d\times d}\\0\leq M\leq{\bf1},\tr(M)\leq k}}
 \tr(BM)
 $.
\end{lemma}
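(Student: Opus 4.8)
The plan is to reduce to the diagonal case and then solve an elementary linear program. First I would spectrally decompose $B=U\Lambda U^\dagger$ with $\Lambda={\rm diag}(\lambda_1,\ldots,\lambda_d)$ and $\lambda_1\geq\cdots\geq\lambda_d\geq 0$; this is possible since $B\geq 0$, so the singular values coincide with the eigenvalues and $\|B\|_{(k)}=\sum_{i=1}^k\lambda_i$. The crucial structural observation is that the feasible set $\{M:0\leq M\leq{\bf1},\,\tr(M)\leq k\}$ is invariant under the substitution $M\mapsto U^\dagger M U$ (all three constraints are unitarily invariant), while $\tr(BM)=\tr(\Lambda\,U^\dagger M U)$. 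Hence it suffices to prove the identity for $B=\Lambda$.

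Next I would observe that for diagonal $\Lambda$ the objective reads $\tr(\Lambda M)=\sum_i\lambda_i M_{ii}$ and therefore depends only on the diagonal entries $m_i:=M_{ii}$. The constraint $0\leq M\leq{\bf1}$ forces $m_i\in[0,1]$, while $\tr(M)\leq k$ gives $\sum_i m_i\leq k$. Thus the right-hand side is bounded above by the value of the linear program $\max\{\sum_i\lambda_i m_i:0\leq m_i\leq 1,\ \sum_i m_i\leq k\}$, and I have replaced the matrix optimization by a scalar one over a simple polytope.

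The key step is the upper bound $\sum_i\lambda_i m_i\leq\sum_{i=1}^k\lambda_i$, which I would establish by the rearrangement-type estimate
\begin{align*}
\sum_i\lambda_i m_i-\sum_{i=1}^k\lambda_i
&=\sum_{i=1}^k\lambda_i(m_i-1)+\sum_{i>k}\lambda_i m_i\\
&\leq\lambda_k\Big(\sum_{i=1}^k(m_i-1)+\sum_{i>k}m_i\Big)
=\lambda_k\big(\tr(M)-k\big)\leq 0,
\end{align*}
where the middle inequality uses $\lambda_i\geq\lambda_k$ together with $m_i-1\leq 0$ for $i\leq k$, and $0\leq\lambda_i\leq\lambda_k$ together with $m_i\geq 0$ for $i>k$, while the last step uses $\lambda_k\geq 0$ and $\tr(M)\leq k$.

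Finally I would exhibit a maximizer to show the bound is tight: taking $M=\sum_{i=1}^k|u_i\rangle\langle u_i|$, the orthogonal projection onto the top-$k$ eigenspace of $B$, gives a feasible $M$ with $\tr(M)=k$ and $\tr(BM)=\sum_{i=1}^k\lambda_i=\|B\|_{(k)}$. I do not expect a genuine obstacle here; the only point requiring care is the rearrangement inequality in the third step, and in particular keeping track of the signs so that the single eigenvalue $\lambda_k$ can be pulled out as a common bound for both sums.
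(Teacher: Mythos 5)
Your proof is correct, and it takes a genuinely different route from the paper's. The paper also diagonalizes $B$ and uses the same maximizer (the projection onto the top-$k$ eigenspace) for tightness, but for the upper bound it invokes von Neumann's trace inequality, $\tr(BM)\leq\sum_j b_j m_j$ with $m_j$ the \emph{eigenvalues} of $M$, and then runs an algebraic estimate pulling out $b_k$. You avoid von Neumann's inequality entirely: by noting that the feasible set $\{M:0\leq M\leq{\bf1},\,\tr(M)\leq k\}$ is invariant under $M\mapsto U^\dagger MU$, you may assume $B$ diagonal, and then the objective depends only on the \emph{diagonal entries} $M_{ii}$, which satisfy $M_{ii}\in[0,1]$ and $\sum_i M_{ii}\leq k$ by elementary facts about positive semi-definite contractions. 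Your closing estimate (pulling out $\lambda_k$ as a common factor) is structurally the same as the paper's, but applied to diagonal entries rather than eigenvalues, so the whole argument reduces to a scalar linear program and is self-contained. What the paper's route buys is the absence of the reduction step, at the cost of citing a nontrivial matrix inequality; what yours buys is a completely elementary proof. One small stylistic remark: in your display the quantity $\tr(M)$ should be read as $\sum_i m_i$ for the diagonal entries of the (already rotated) $M$, which is what you intend, but it is worth stating explicitly since $M$ changed meaning after the reduction.
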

This allows us to find the following re-formulation of Prop.~\ref{prop_Marcin} which will resolve the question raised in Rem.~\ref{rem_marc}~(iii).
\begin{thm}\label{thmmain}
 Given any $d\in\mathbb N$, $k\in\{1,\ldots,d\}$, and $\Phi\in\mathcal L(\mathbb C^{d\times d})$ Hermitian-preserving the following statements are equivalent.
 \begin{itemize}
 \item[(i)] $\Phi$ is $k$-positive
 \item[(ii)]
 \begin{equation}\label{eq:equivineq}
 \max( \sigma( \mathsf C(\Phi) ) )\geq \max_{\substack{M\in\mathbb C^{d\times d}\\
 0\leq M\leq{\bf1},\tr(M)\leq k
 }}\big\| 
 \tr_{1,M}(VV^\dagger)
 \big\|_{\infty}
 \end{equation}
 where
 $V:\mathbb C^d\to\mathbb C^d\otimes\mathbb C^r$ is any Stinespring operator\footnote{
 The convention we choose here and henceforth is that $V$ is a Stinespring operator of $\Phi$ if $\Phi=\tr_2(V(\cdot)V^\dagger)$.
 }
 of $\max( \sigma( \mathsf C(\Phi) ) )\tr(\cdot){\bf1}-\Phi$ (with $r\leq d^2$ the rank of $\max( \sigma( \mathsf C(\Phi) ) ){\bf1}-\mathsf C(\Phi)$), and $\tr_{1,X}(Y)$ is the partial trace of $Y$ with respect to $X$, i.e., $\tr_{1,X}(Y)$ is the unique operator such that $\tr(Z\tr_{1,X}(Y))=\tr((X\otimes Z)Y)$ for all $Z$\,\footnote{
 The relation to the standard partial trace reads $\tr_{1,X}(Y)=\tr_1(( X\otimes{\bf1})Y)$
 as is straightforward from the definitions.
 }.
 \end{itemize}
\end{thm}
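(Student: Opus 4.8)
The plan is to derive Theorem~\ref{thmmain} from Proposition~\ref{prop_Marcin} by rewriting only the right-hand maximum of~\eqref{eq:Marcin1}, chaining together the reformulation already recorded in Remark~\ref{rem_marc}~(iv) with the Ky Fan characterization of Lemma~\ref{lemma_char_KyFan}. Throughout I abbreviate $\lambda:=\max(\sigma(\mathsf C(\Phi)))$ and let $V$ be a Stinespring operator of $\lambda\tr(\cdot){\bf1}-\Phi$, so that its Kraus operators $K_i$ satisfy $V=\sum_iK_i\otimes|i\rangle$ and $r$ equals the Kraus rank, i.e.\ the rank of $\lambda{\bf1}-\mathsf C(\Phi)$ (the latter being the Choi matrix of $\lambda\tr(\cdot){\bf1}-\Phi$, since $\mathsf C(\tr(\cdot){\bf1})={\bf1}$). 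By Proposition~\ref{prop_Marcin}, $\Phi$ is $k$-positive iff $\lambda$ dominates the maximum in~\eqref{eq:Marcin1}; hence it suffices to prove that the maximum in~\eqref{eq:Marcin1} equals the maximum in~\eqref{eq:equivineq}.

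First I would invoke Remark~\ref{rem_marc}~(iv), which already rewrites the maximum in~\eqref{eq:Marcin1} as $\max_{\|x\|=1}\|({\bf1}\otimes\langle x|)VV^\dagger({\bf1}\otimes|x\rangle)\|_{(k)}$. For each fixed unit vector $x\in\mathbb C^r$ the matrix $B_x:=({\bf1}\otimes\langle x|)VV^\dagger({\bf1}\otimes|x\rangle)$ has the form $AA^\dagger$ with $A:=({\bf1}\otimes\langle x|)V$, so it is positive semi-definite and Lemma~\ref{lemma_char_KyFan} applies, giving $\|B_x\|_{(k)}=\max_{0\leq M\leq{\bf1},\,\tr(M)\leq k}\tr(B_xM)$. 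The key computation is to simplify this trace: by cyclicity, $\tr(B_xM)=\tr\big(VV^\dagger({\bf1}\otimes|x\rangle)M({\bf1}\otimes\langle x|)\big)$, and the leg identity $({\bf1}\otimes|x\rangle)M({\bf1}\otimes\langle x|)=M\otimes|x\rangle\langle x|$ turns this into $\tr\big((M\otimes|x\rangle\langle x|)VV^\dagger\big)$. Applying the defining property of $\tr_{1,M}$ with $X=M$, $Z=|x\rangle\langle x|$, $Y=VV^\dagger$ then yields $\tr(B_xM)=\langle x|\tr_{1,M}(VV^\dagger)|x\rangle$.

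Having reduced the objective to the bilinear form $\langle x|\tr_{1,M}(VV^\dagger)|x\rangle$, the two maximizations decouple cleanly. I would exchange the two suprema (a supremum over a product set is order-independent) so as to fix $M$ first and maximize over the unit sphere. Since $M\geq 0$ and $VV^\dagger\geq 0$ force $\tr_{1,M}(VV^\dagger)\geq 0$—test against any $Z\geq 0$ via the defining identity and note $M\otimes Z\geq 0$—the Rayleigh-quotient maximum $\max_{\|x\|=1}\langle x|\tr_{1,M}(VV^\dagger)|x\rangle$ is exactly $\|\tr_{1,M}(VV^\dagger)\|_\infty$. Taking the remaining maximum over $M$ reproduces precisely the right-hand side of~\eqref{eq:equivineq}, closing the chain of equalities and hence the equivalence.

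The conceptual content is light, so the main obstacle is purely bookkeeping: getting the leg identity $({\bf1}\otimes|x\rangle)M({\bf1}\otimes\langle x|)=M\otimes|x\rangle\langle x|$ right and keeping the factor traced out by $\tr_{1,M}$ (the first, $d$-dimensional leg) cleanly separated from the factor on which $x$ acts (the second, $r$-dimensional leg). One should also confirm that the objective does not depend on the chosen Stinespring representation, but this is immediate: any other choice is of the form $({\bf1}\otimes W)V$ with $W$ an isometry on the auxiliary leg, so $\tr_{1,M}(VV^\dagger)$ is conjugated by $W$ and its operator norm is unchanged—consistent with the representation-independence already built into Proposition~\ref{prop_Marcin}.
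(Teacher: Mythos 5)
Your proposal is correct and follows essentially the same route as the paper's proof: both arguments identify the Marciniak maximum with the maximum in~\eqref{eq:equivineq} through the identical chain of equalities -- Lemma~\ref{lemma_char_KyFan}, the rewriting $\tr(B_xM)=\tr\big((M\otimes|x\rangle\langle x|)VV^\dagger\big)=\langle x|\tr_{1,M}(VV^\dagger)|x\rangle$, an exchange of the two maximizations, and the Kraus-expansion identity $\sum_i x_iK_i=({\bf1}\otimes\langle\overline{x}|)V$ from Remark~\ref{rem_marc}~(iv) -- merely traversed in the opposite direction (the paper starts from the right-hand side of~\eqref{eq:equivineq} and lands on~\eqref{eq:Marcin1}, you start from~\eqref{eq:Marcin1}). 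The only cosmetic difference is that you prove equality of the two optimal values outright, whereas the paper phrases it as two implications shown ``analogously''; the mathematical content is the same.
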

\begin{proof}
The first step is to re-write the maximum in~\eqref{eq:equivineq}: Given any $A,B\geq 0$ of compatible size, $\tr_{1,A}(B)\geq 0$ (as follows readily from the definition), so we compute
\begin{align*}
 \max_{0\leq M\leq{\bf1},\tr(M)\leq k
 }\big\| 
 \tr_{1,M}(VV^\dagger)
 \big\|_{\infty}&=\max_{0\leq M\leq{\bf1},\tr(M)\leq k
 }\max_{\|x\|=1}\langle x| 
 \tr_{1,M}(VV^\dagger)|x\rangle\\
 &=\max_{0\leq M\leq{\bf1},\tr(M)\leq k
 }\max_{\|x\|=1}\tr\big(|x\rangle\langle x| 
 \tr_{1,M}(VV^\dagger)\big)\\
 &=\max_{0\leq M\leq{\bf1},\tr(M)\leq k
 }\max_{\|x\|=1} 
 \tr\big((M \otimes  |x\rangle\langle x|)VV^\dagger\big)\\
 &=\max_{\|x\|=1} \max_{0\leq M\leq{\bf1},\tr(M)\leq k
 }
 \tr\big((M \otimes  |x\rangle\langle x|)VV^\dagger\big)\\
 &=\max_{\|x\|=1} \max_{0\leq M\leq{\bf1},\tr(M)\leq k
 }
 \tr\big( M\tr_{2,|x\rangle\langle x|}(VV^\dagger)\big)\,.
\end{align*}
Here, $\tr_{2,|x\rangle\langle x|}(VV^\dagger)$ is defined analogously to $\tr_{1,(\cdot)}$. Moreover, by the same argument as before $\tr_{2,|x\rangle\langle x|}(VV^\dagger)\geq 0$.
This re-formulation is useful because we can now combine it with Lemma~\ref{lemma_char_KyFan} to obtain
\begin{align*}
 \max_{0\leq M\leq{\bf1},\tr(M)\leq k
 }\big\| 
 \tr_{1,M}(VV^\dagger)
 \big\|_{\infty}&=
 \max_{\|x\|=1} \big\|\tr_{2,|x\rangle\langle x|}(VV^\dagger)\big\|_{(k)}\,.
\end{align*}
With this we are ready to prove the claimed equivalence.

``(ii) $\Rightarrow$ (i)'': To show that $\Phi$ is $k$-positive we have to verify~\eqref{eq:Marcin1}. Given any set of Kraus operators $\{K_j\}_j$ of $\max( \sigma( \mathsf C(\Phi) ) )\tr(\cdot){\bf1}-\Phi$, $V:=\sum_j K_j\otimes |j\rangle $ is well known to be a Stinespring operator of $\max( \sigma( \mathsf C(\Phi) ) )\tr(\cdot){\bf1}-\Phi$. 
Thus
\begin{align*}
 \max( \sigma( \mathsf C(\Phi) ) )&\geq\max_{0\leq M\leq{\bf1},\tr(M)\leq k
 }\big\| 
 \tr_{1,M}(VV^\dagger)
 \big\|_{\infty}\\
 &=
 \max_{\|x\|=1} \big\|\tr_{2,|x\rangle\langle x|}(VV^\dagger)\big\|_{(k)}\\
 &=\max_{\|x\|=1} \Big\|\sum_{i,j}\tr_{2,|x\rangle\langle x|}( K_iK_j^\dagger\otimes|i\rangle\langle j|)\Big\|_{(k)}\\
 &=\max_{\|x\|=1} \Big\|\sum_{i,j}\langle x|i\rangle\langle j|x\rangle K_iK_j^\dagger)\Big\|_{(k)}\\
 &=\max_{\|x\|=1}\Big\| \Big(\sum_i\overline x_iK_i\Big)\Big(\sum_j\overline x_jK_j\Big)^\dagger \Big\|_{(k)}\\
 &=\max_{\|x\|=1}\Big\| \Big(\sum_ix_iK_i\Big)\Big(\sum_jx_jK_j\Big)^\dagger \Big\|_{(k)}
\end{align*}
where in the last step we substituted $x\to\overline x$.
Finally, ``(i) $\Rightarrow$ (ii)''
is shown analogously.
\end{proof}

What is nice about this criterion is that for $k\geq d$ (i.e., complete positivity) it illustrates how the optimization problem in question resolves in a most simple manner: If $k\geq d$, then every operator $M\in\mathbb C^{d\times d}$ with $0\leq M\leq{\bf1}$ automatically satisfies $\tr(M)\leq k$, so
\begin{align*}
    \max_{ 0\leq M\leq{\bf1},\tr(M)\leq k
 }\big\| 
 \tr_{1,M}(VV^\dagger)
 \big\|_{\infty}&\leq\max_{\|M\|_\infty\leq 1}\|\tr_{1,M}(VV^\dagger)\|_\infty\\
 &=\|\tr_{1,(\cdot)}(VV^\dagger)\|_{\infty\to\infty}\\
 &=\|\tr_{1}(VV^\dagger)\|_\infty\\
 &\leq \max_{ 0\leq M\leq{\bf1},\tr(M)\leq k
 }\big\| 
 \tr_{1,M}(VV^\dagger)
 \big\|_{\infty}\,.
\end{align*}
In the third step we used \cite[Coro.~2.3.8]{Bhatia07} because $M\mapsto \tr_{1,M}(VV^\dagger)$ is a positive map, so its norm is attained on the identity.
In other words, what all of this shows is that for $k\geq d$, the maximum in~\eqref{eq:Marcin1} is equal to the (easily computable) quantity $\|\tr_{1}(VV^\dagger)\|_\infty$ where $V=\sum_j K_j\otimes|j\rangle$.
Hence, $\Phi$ is completely positive if and only if the largest eigenvalue of $\tr_{1}(VV^\dagger)$ does not exceed $\max( \sigma( \mathsf C(\Phi) ) )$.

\subsection{Second condition: Connection to 3-tensors}\label{sec_3tensor}

Next, let us follow up on the observation from Rem.~\ref{rem_marc}~(iv) and attempt to generalize Eq.~\eqref{eq:Marc_1_3} such that it also characterizes $k$-positivity for $k>1$.
In fact, we will prove the following.
\begin{thm}\label{thm1a}
 Given any $d\in\mathbb N$, $k\in\{1,\ldots,d\}$, and $\Phi\in\mathcal L(\mathbb C^{d\times d})$ Hermitian-preserving the following statements are equivalent.
 \begin{itemize}
 \item[(i)] $\Phi$ is $k$-positive
 \item[(ii)]
\begin{equation}\label{eq:thm1a_1}
    \max_{x\in\mathbb C^d\otimes\mathbb C^k,\|x\|=1}\|V^\dagger(| x\rangle\otimes{\bf1})\|_\infty\leq \sqrt{ k\max( \sigma( \mathsf C(\Phi) ) )}
\end{equation}
 where $V:\mathbb C^d\otimes\mathbb C^k\to\mathbb C^d\otimes\mathbb C^k\otimes\mathbb C^r$ is any Stinespring operator
 of $k\max( \sigma( \mathsf C(\Phi) ) )\tr(\cdot){\bf1}-\Phi\otimes{\rm id}_k$ (with $r\leq d^2$ the rank of $\max( \sigma( \mathsf C(\Phi) ) ){\bf1}-\mathsf C(\Phi)$).
 \end{itemize}
\end{thm}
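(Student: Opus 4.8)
The plan is to lift the $k=1$ reasoning behind Eq.~\eqref{eq:Marc_1_3} to the inflated map $\Psi:=\Phi\otimes{\rm id}_k$, whose ordinary positivity is exactly the $k$-positivity of $\Phi$. First I would reduce positivity of $\Psi$ to a statement about pure inputs: $\Phi$ is $k$-positive if and only if $P_v:=(\Phi\otimes{\rm id}_k)(|v\rangle\langle v|)\geq 0$ for every unit vector $v\in\mathbb C^d\otimes\mathbb C^k$, since every positive semi-definite input is a conical combination of rank-one projections and $\Phi\otimes{\rm id}_k$ is linear. The objective thus becomes to certify $P_v\geq 0$ for all $v$.

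Next I would pin down the shift constant. Because the Choi matrix of a tensor product factorizes up to a permutation of tensor legs, $\mathsf C(\Phi\otimes{\rm id}_k)$ is unitarily equivalent to $\mathsf C(\Phi)\otimes\mathsf C({\rm id}_k)$; as $\mathsf C({\rm id}_k)=|\Gamma_k\rangle\langle\Gamma_k|$ is rank one with nonzero eigenvalue $k$, this yields $\max(\sigma(\mathsf C(\Phi\otimes{\rm id}_k)))=k\max(\sigma(\mathsf C(\Phi)))=:kc_0$. Hence $kc_0$ is the smallest shift for which $\Psi':=kc_0\tr(\cdot){\bf1}-\Phi\otimes{\rm id}_k$ is completely positive, guaranteeing both that a Stinespring operator $V$ as in the statement exists and that $\Psi'(|v\rangle\langle v|)=kc_0{\bf1}-P_v\geq 0$ for every $v$.

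The technical heart is to identify the left-hand side of~\eqref{eq:thm1a_1} with $\max_{\|v\|=1}\|\Psi'(|v\rangle\langle v|)\|_\infty$. Using the $C^*$-identity $\|A\|_\infty^2=\|AA^\dagger\|_\infty$ I would rewrite $\|V^\dagger(|x\rangle\otimes{\bf1})\|_\infty^2=\|V^\dagger(|x\rangle\langle x|\otimes{\bf1})V\|_\infty$, and then, expanding $V=\sum_jL_j\otimes|j\rangle$ in Kraus form, recognize $V^\dagger(|x\rangle\langle x|\otimes{\bf1})V=\sum_jL_j^\dagger|x\rangle\langle x|L_j=(\Psi')^\dagger(|x\rangle\langle x|)$ as the adjoint map evaluated at a pure state. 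Since $(\Psi')^\dagger$ is completely positive its output is positive semi-definite, so its norm is its largest eigenvalue and $\max_{\|x\|=1}\|V^\dagger(|x\rangle\otimes{\bf1})\|_\infty^2=\max_{\|x\|=1,\|v\|=1}\langle v|(\Psi')^\dagger(|x\rangle\langle x|)|v\rangle$. Invoking the defining relation $\tr((\Psi')^\dagger(A)B)=\tr(A\Psi'(B))$ then converts this into $\max_{x,v}\langle x|\Psi'(|v\rangle\langle v|)|x\rangle=\max_{\|v\|=1}\|\Psi'(|v\rangle\langle v|)\|_\infty$, using positivity of $\Psi'(|v\rangle\langle v|)$ once more.

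Substituting $\Psi'(|v\rangle\langle v|)=kc_0{\bf1}-P_v$ finally gives $\max_{\|v\|=1}\|\Psi'(|v\rangle\langle v|)\|_\infty=kc_0-\min_v\lambda_{\min}(P_v)$, so that~\eqref{eq:thm1a_1} reads $kc_0-\min_v\lambda_{\min}(P_v)\leq kc_0$, i.e.\ $\min_v\lambda_{\min}(P_v)\geq 0$, which is precisely the certificate $P_v\geq 0$ for all $v$. I expect the main obstacle to be the bookkeeping in the central computation: correctly tracking the system slot $\mathbb C^d\otimes\mathbb C^k$ against the ancilla slot $\mathbb C^r$ through the $C^*$-identity, and, above all, justifying the duality step that trades the output test vector $x$ for the input $v$. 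The remaining ingredients — reduction to rank-one inputs, the spectral identity fixing the shift, and the final rearrangement — are routine.
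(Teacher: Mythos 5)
Your proposal is correct and takes essentially the same route as the paper's proof: reduce $k$-positivity to positivity of $\Phi\otimes{\rm id}_k$ on pure inputs, certify complete positivity of the shifted map $k\max(\sigma(\mathsf C(\Phi)))\tr(\cdot){\bf 1}-\Phi\otimes{\rm id}_k$ via the tensor factorization of its Choi matrix (largest eigenvalue $k\max(\sigma(\mathsf C(\Phi)))$), and then use the $C^*$-identity to turn $\max_{\|x\|=1}\|V^\dagger(|x\rangle\otimes{\bf 1})\|_\infty^2$ into the maximal output norm of the shifted map on pure states. The only cosmetic difference is that you route the central exchange of the output vector $x$ and the input vector $v$ through the Hilbert--Schmidt adjoint $(\Psi')^\dagger$ in Kraus form, whereas the paper carries out the identical algebra directly by cyclicity of the trace on the Stinespring representation.
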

A proof can be found in Appendix~C. At this point let us note two things. First, while~\eqref{eq:thm1a_1} looks quite similar to~\eqref{eq:Marc_1_3} (and for $k=1$ it is easy to see that the two are equivalent, so this is in fact a generalization of the latter), for $k>1$ this equivalence is far from obvious because the $k$-th Ky-Fan norm of $AA^\dagger$ can, in general, not be expressed via the $k$-th Ky Fan norm of $A$.
Moreover, while this result is equivalent to the one by Marciniak et al., our proof differs fundamentally from theirs, as we did not employ the quantity $\|({\bf1}\otimes Q)C({\bf1}\otimes Q)\|$ (with $Q$ projections of rank $k$) at any point.

Now one of the reasons Eq.~\eqref{eq:thm1a_1} is conceptually interesting is that the operator $V^\dagger((\cdot)\otimes{\bf1})$ is the coordinate hypermatrix of the trilinear map
\begin{align*}
    V':\mathbb C^{nk}\times\mathbb C^r\times\mathbb C^{nk}&\to\mathbb C\\
    (x,y,z)&\mapsto z^TV^\dagger (x\otimes y)\,,
\end{align*}
making it basically a 3-tensor.
The spectral norm of this map is defined as $\|V'\|:=\sup_{\|x\|=\|y\|=\|z\|=1} |z^T V^\dagger (x\otimes y)|$, which is readily verified to be equal to the optimization problem in Eq.~\eqref{eq:thm1a_1}.
However, computing and even \textit{approximating} such a spectral norm is NP-hard \cite[Thm.~1.11 \& 1.12]{HL13}, which further confirms that deciding $k$-positivity for $k<d$ is most likely an NP-hard problem as well.
We note that $\|V'\|$ is also known as the Schmidt 1-norm of $V^\dagger$ (or of $VV^\dagger$, for symmetric dimensions) \cite{JK10} which has a semi-definite program (SDP) hierarchy associated to it \cite{JK11}.

\subsection{Third condition: Connection to the separability problem}\label{sec_newconnect_ent}

The section's final conceptual insight will be about finding the separability problem is the previous optimization-based characterizations of $k$-positivity. While this connection is not surprising given that separability is characterized via positive maps (as already explained in the introduction) \cite{HHH96,Stormer86}, it will still be interesting to make this connection explicit. In fact, we get the following corollary of the previous result:

\begin{corollary}\label{coro_main1}
 Given any $d\in\mathbb N$, $k\in\{1,\ldots,d\}$, and $\Phi\in\mathcal L(\mathbb C^{d\times d})$ Hermitian-preserving the following statements are equivalent.
 \begin{itemize}
 \item[(i)] $\Phi$ is $k$-positive
 \item[(ii)]
 \begin{equation*}
\max_{\rho\in\mathbb D(\mathbb C^d\otimes\mathbb C^k),\omega\in\mathbb D(\mathbb C^r)}\langle VV^\dagger,\rho\otimes\omega\rangle_{\rm HS}
 \leq
 k\max( \sigma( \mathsf C(\Phi) ) )
 \end{equation*}
 where $V:\mathbb C^d\otimes\mathbb C^k\to\mathbb C^d\otimes\mathbb C^k\otimes\mathbb C^r$ is any Stinespring operator
 of $k\max( \sigma( \mathsf C(\Phi) ) )\tr(\cdot){\bf1}-\Phi\otimes{\rm id}_k$ (with $r\leq d^2$ the rank of $\max( \sigma( \mathsf C(\Phi) ) ){\bf1}-\mathsf C(\Phi)$).
 Moreover, here and henceforth, $\mathbb{D}$ shall denote the set of density operators over a given Hilbert space.
 \item[(iii)] $\max_{\rho\in\mathbb D(\mathbb C^{nk}\otimes\mathbb C^r)\text{ separable}}\langle VV^\dagger,\rho\rangle_{\rm HS}
 \leq
 k\max( \sigma( \mathsf C(\Phi) ) )$ with $V$ as in (ii).
 \end{itemize}
\end{corollary}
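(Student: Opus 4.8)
The plan is to show that all three conditions are reformulations of the optimization problem appearing in Theorem~\ref{thm1a}(ii). Since Theorem~\ref{thm1a} already establishes that $k$-positivity is equivalent to $\max_{\|x\|=1}\|V^\dagger(|x\rangle\otimes{\bf1})\|_\infty\leq\sqrt{k\max(\sigma(\mathsf C(\Phi)))}$, it suffices to prove that the left-hand sides of conditions~(ii) and~(iii) coincide with the square of the optimization problem in Theorem~\ref{thm1a}(ii). Squaring the inequality in Theorem~\ref{thm1a}(ii), the bound becomes $\max_{\|x\|=1}\|V^\dagger(|x\rangle\otimes{\bf1})\|_\infty^2\leq k\max(\sigma(\mathsf C(\Phi)))$, which already matches the right-hand sides appearing in (ii) and (iii). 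The crux is therefore to re-express the spectral-norm optimization as a bilinear form over density operators.

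First I would rewrite the objective using the $C^*$-identity and the variational formula for the operator norm. For fixed unit vector $x\in\mathbb C^d\otimes\mathbb C^k$ we have
\begin{align*}
\|V^\dagger(|x\rangle\otimes{\bf1})\|_\infty^2&=\|(|x\rangle\otimes{\bf1})^\dagger VV^\dagger(|x\rangle\otimes{\bf1})\|_\infty\\
&=\max_{\omega\in\mathbb D(\mathbb C^r)}\tr\big((|x\rangle\langle x|\otimes\omega)VV^\dagger\big)\,,
\end{align*}
where the second equality uses that the operator norm of a positive semi-definite matrix equals its maximal expectation value over density operators $\omega$ on $\mathbb C^r$. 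Taking the maximum over $x$ and recognizing $|x\rangle\langle x|$ as a pure state, I would then invoke convexity: since the objective $\langle VV^\dagger,\rho\otimes\omega\rangle_{\rm HS}$ is linear (hence both convex and concave) in $\rho$, its maximum over the convex set $\mathbb D(\mathbb C^d\otimes\mathbb C^k)$ is attained at an extreme point, i.e.\ at a pure state $\rho=|x\rangle\langle x|$. This upgrades the maximum over pure states to a maximum over all $\rho\in\mathbb D(\mathbb C^d\otimes\mathbb C^k)$, yielding precisely condition~(ii).

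For the equivalence of~(ii) and~(iii), I would observe that the product states $\rho\otimes\omega$ ranging over $\mathbb D(\mathbb C^d\otimes\mathbb C^k)\times\mathbb D(\mathbb C^r)$ are exactly the pure-product-generating building blocks of separable states on $\mathbb C^{nk}\otimes\mathbb C^r$ (with $n=d$ in the notation of the corollary). Since $\langle VV^\dagger,\cdot\rangle_{\rm HS}$ is linear, its maximum over the convex hull of product states---which is by definition the set of separable states---is attained at an extreme point of that set, and the extreme points of the separable cone are pure product states. Hence maximizing over all separable $\rho$ gives the same value as maximizing over products $\rho\otimes\omega$, establishing (iii) $\Leftrightarrow$ (ii).

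The main obstacle I anticipate is bookkeeping rather than conceptual: one must be careful that the dimension labels match (the factor $\mathbb C^d\otimes\mathbb C^k$ is grouped as a single system $\mathbb C^{nk}$ in part~(iii), so the separable states in~(iii) are those with respect to the bipartition $(\mathbb C^d\otimes\mathbb C^k)\,{:}\,\mathbb C^r$, not a finer one). The only subtle point is justifying that the supremum over separable states is attained and equals the supremum over product states; this follows because $\mathbb D(\mathbb C^{nk}\otimes\mathbb C^r)$ is compact, the separable states form a closed convex subset whose extreme points are the pure product states, and a linear functional on a compact convex set attains its maximum at an extreme point.
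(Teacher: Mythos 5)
Your proposal is correct and follows essentially the same route as the paper's own proof: both reduce Corollary~\ref{coro_main1} to Theorem~\ref{thm1a} by combining the $C^*$-identity $\|AA^\dagger\|_\infty=\|A\|_\infty^2$, the variational characterization of the operator norm of the positive semi-definite matrix $(\langle x|\otimes{\bf1})VV^\dagger(|x\rangle\otimes{\bf1})$, and the fact that a functional linear in each of $\rho$ and $\omega$ attains its maximum over density operators at pure states; the only cosmetic difference is that you pass from the norm formulation to the bilinear one while the paper argues in the reverse direction, and you spell out the extreme-point argument for the equivalence of (ii) and (iii), which the paper compresses into ``convexifying the problem.''
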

\begin{proof}
    ``(i) $\Leftrightarrow$ (ii)'': Because $\tr\big(V^\dagger(\rho\otimes\omega)V\big)$ is a convex (because linear) function in both $\rho$ and $\omega$, which is optimized over the compact, convex set $\mathbb D(\mathbb C^d\otimes\mathbb C^k)\times \mathbb D(\mathbb C^r)$, the maximum is attained on an extreme point. In other words,
    $$
    \max_{\rho\in\mathbb D(\mathbb C^d\otimes\mathbb C^k),\omega\in\mathbb D(\mathbb C^r)}\tr\big(V^\dagger(\rho\otimes\omega)V\big)=\max_{\substack{x\in\mathbb C^d\otimes\mathbb C^k,y\in \mathbb C^r\\\|x\|=\|y\|=1}}\tr\big(V^\dagger(|x\rangle\langle x|\otimes|y\rangle\langle y|)V\big)\,.
    $$
    Our goal is to show that this is equal to $\max_{x\in\mathbb C^d\otimes\mathbb C^k,\|x\|=1}\|V^\dagger(|x\rangle\otimes{\bf1})\|_\infty^2$ because then we would be done by Theorem~\ref{thm1a}. Indeed,
    \begin{align*}
        \max_{x\in\mathbb C^d\otimes\mathbb C^k,\|x\|=1}&\max_{y\in \mathbb C^r,\|y\|=1}\tr\big(V^\dagger(|x\rangle\langle x|\otimes|y\rangle\langle y|)V\big)\\
        &=\max_{x\in\mathbb C^d\otimes\mathbb C^k,\|x\|=1}\max_{y\in \mathbb C^r,\|y\|=1}\tr\big(V^\dagger(|x\rangle\otimes{\bf1})|y\rangle\langle y|(\langle x|\otimes{\bf1})V\big)\\
        &=\max_{x\in\mathbb C^d\otimes\mathbb C^k,\|x\|=1}\max_{y\in \mathbb C^r,\|y\|=1}\langle y|(\langle x|\otimes{\bf1})VV^\dagger(|x\rangle\otimes{\bf1})|y\rangle\\
        &=\max_{x\in\mathbb C^d\otimes\mathbb C^k,\|x\|=1}\|(\langle x|\otimes{\bf1})VV^\dagger(|x\rangle\otimes{\bf1})\|_\infty\\
        &=\max_{x\in\mathbb C^d\otimes\mathbb C^k,\|x\|=1}\|V^\dagger(|x\rangle\otimes{\bf1})\|_\infty^2\,.
    \end{align*}
    In the third step we used that $(\langle x|\otimes{\bf1})VV^\dagger(|x\rangle\otimes{\bf1})\geq 0$ so the maximum over $y$ yields the largest eigenvalue---which, here, is equivalent to the largest singular value (operator norm)---and in the last step we used the $C^*$-property $\|AA^\dagger\|_\infty=\|A\|_\infty^2$. ``(ii) $\Leftrightarrow$ (iii)'': This now follows at once from convexifying the problem.
\end{proof}
In other words, the separability problem was implicitly hidden in the optimization's feasible set. While, intuitively, this may be ``how'' the NP-hardness of entanglement detection translates to hardness of deciding ($k$-)positivity, one has to be careful with such conclusions because for $k=d$ this argument somehow has to break down (as deciding complete positivity is certainly a lot simpler than deciding separability).

At point we want to stress that, unsurprisingly, Coro.~\ref{coro_main1} is not the only way to encode $k$-positivity in a bilinear optimization problem: recall that $\Phi$ is positive if and only if ${\rm tr}(\mathsf C(\Phi)(\rho\otimes\omega))\geq 0$ for all states $\rho,\omega$ (block positivity). Thus $k$-positivity of a Hermitian-preserving linear map $\Phi:\mathbb C^{d\times d}\to\mathbb C^{d\times d}$ is equivalent to
\begin{equation}\label{eq:kpos_bilin_alt}
    \min_{\rho,\omega\in\mathbb D(\mathbb C^d\otimes\mathbb C^k)}{\rm tr}(\mathsf C(\Phi\otimes{\rm id}_k)(\rho\otimes\omega))\geq 0\,.
\end{equation}
One reason these bilinear formulations are interesting is that there already exist tools to compute them numerically.
Aside from the well-established seesaw method or branch-and-bound algorithms \cite{HKT20}, another way to tackle this is via SDP hierarchies for bilinear optimization problems \cite{berta2022bilinearSDPhierarchy}.
In fact, applying the first level of the hierarchy in \cite[Thm.~3.1]{berta2022bilinearSDPhierarchy} (with the PPT condition $P^{T_1}\geq 0$ included)
to the optimization problem in \eqref{eq:kpos_bilin_alt} yields the quantity
\begin{multline}\label{eq:kpos_SDP}
    F(\Phi):=\min_{\substack{P\in\mathbb{P}(\mathbb{C}^{dk}\otimes\mathbb{C}^{dk})\\\rho\in\mathbb{D}(\mathbb{C}^d\otimes\mathbb{C}^k)\\\omega\in\mathbb{D}(\mathbb{C}^d\otimes\mathbb{C}^k)}}\Big\{\tr(\mathsf{C}(\Phi\otimes\text{id}_k)P):\tr_1(P)=\omega,\\[-2.5em]\tr_2(P)=\rho,\,P^{T_1}\geq 0\Big\},
\end{multline}
where $\mathbb{P}$ denotes the set of positive semi-definite linear operators, and $(\cdot)^{T_1}=(T\otimes{\rm id})(\cdot)$ is the partial transpose over the first subsystem. In fact, by making the key observation that
\begin{equation}
    F(\Phi)\leq \min_{\rho,\omega\in\mathbb D(\mathbb C^d\otimes\mathbb C^k)}{\rm tr}(\mathsf C(\Phi\otimes{\rm id}_k)(\rho\otimes\omega))
\end{equation}
(because $F(\Phi)$ involves a minimization with respect to a set that contains the optimization domain of \eqref{eq:kpos_bilin_alt}), this leads to the following \textit{sufficient} condition for a Hermitian-preserving linear map $\Phi$ to be $k$-positive:
If $F(\Phi)\geq 0$, then $\Phi$ is $k$-positive, because then
\begin{equation}\label{eq:kpos_SDP_condition}
\min_{\rho,\omega\in\mathbb D(\mathbb C^d\otimes\mathbb C^k)}{\rm tr}(\mathsf C(\Phi\otimes{\rm id}_k)(\rho\otimes\omega)) \geq F(\Phi)\geq 0\,,
\end{equation}
i.e.,~\eqref{eq:kpos_bilin_alt} holds.
To be transparent, when doing a numerical study of this sufficient condition (i.e., generating random Hermitian matrices and checking the condition in \eqref{eq:kpos_SDP_condition}), we only found \textit{decomposable} positive maps that this sufficient criterion could detect as positive, among several thousand trials. Whether this has to do with our underlying sampling method, or whether including the PPT condition $P^{T_1}\geq 0$ somehow tailors this condition more towards detecting decomposable maps, is unclear to us. The code used to implement the optimization in \eqref{eq:kpos_SDP} can be found in the ancillary files of the arXiv submission of this paper.

\section{A new way to sample non-decomposable $k$-positive maps}\label{sec_newnondec}
With this, let us return to the other question raised in the introduction: How can one numerically sample $k$-positive\footnote{For simplicity we will drop the ``$k$-'' and focus on positive maps, but we stress that the method presented in this section works exactly the same for $k$-positivity.} maps?
There are (at least) two straightforward ideas:

1.~Sample a Hermitian matrix at random and check whether it is the Choi matrix of a positive map (however, the constraints here go much beyond the number of negative eigenvalues from Rem.~\ref{rem1}~(ii)).

2.~Fix a map $\Phi$ from the interior of the completely positive maps (w.r.t.~the Hermitian-preserving maps, so $\mathsf C(\Phi)>0$) and, again, sample a Hermitian matrix $H$ at random. Then, take an increasing sequence of step sizes $(t_n)_{n\in\mathbb N}\subset[0,1]$ and check whether $(1-t_n)\mathsf C(\Phi)+t_nH$ is the Choi matrix of a positive map for each $n$. For small enough $n$, this map will always be completely positive, but as $n$ increases one will either land inside the positive maps, or one will pierce a common face of the cone of positive and the cone of completely positive maps \cite{Marciniak10}; so not only is there no guarantee that this approach works, but both these ideas suffer from the fundamental problem that they rely on a way to decide ($k$-)positivity.

\subsection{The theory}
    
What if instead we have a method where all we need is just \textit{one} positive map, and from that map we can produce an infinite number of maps that are, by design, \textit{guaranteed} to be positive (but not completely positive)?
The following proposition---which is motivated by Lie-semigroup theory---will be the foundation for this new strategy.
    \begin{proposition}\label{prop_exp_semigroup}
    Let $\mathsf{CP}(d)\subset\mathcal L(\mathbb C^{d\times d})$ denote the set of all completely positive maps in $d$-dimensions, and let $\mathcal S\subseteq\mathcal L(\mathbb C^{d\times d})$, $\mathcal S\supseteq\mathsf{CP}(d)$ be a closed convex cone which is also a semigroup with identity. Then for all $\Phi\in\mathcal S$, all $K\in\mathbb C^{d\times d}$, and all $t\geq 0$ one has $e^{t(K(\cdot)+(\cdot)K^\dagger+\Phi)}\in\mathcal S$.
    \end{proposition}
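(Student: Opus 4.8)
The plan is to split the generator into two pieces whose exponentials individually lie in $\mathcal S$, and then recombine them using the Lie--Trotter product formula together with the three defining properties of $\mathcal S$ (closed, convex cone, semigroup with identity). Concretely, set $\mathcal G_1:=K(\cdot)+(\cdot)K^\dagger$ and $\mathcal G_2:=\Phi$, so that the generator of interest is $\mathcal G=\mathcal G_1+\mathcal G_2$. Since everything takes place on the finite-dimensional space $\mathcal L(\mathbb C^{d\times d})$, all maps are bounded and the Lie product formula
\begin{equation*}
e^{t\mathcal G}=\lim_{n\to\infty}\big(e^{(t/n)\mathcal G_1}\,e^{(t/n)\mathcal G_2}\big)^n
\end{equation*}
holds with no analytic subtleties. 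Hence it suffices to show that each factor $e^{s\mathcal G_1}$ and $e^{s\mathcal G_2}$ lies in $\mathcal S$ for every $s\geq 0$, and then to exploit that $\mathcal S$ is a closed semigroup.

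For the first factor I would check by differentiation that $e^{s\mathcal G_1}(X)=e^{sK}X\,e^{sK^\dagger}=e^{sK}X(e^{sK})^\dagger$, i.e.\ $e^{s\mathcal G_1}$ is conjugation by the single operator $e^{sK}$; such a map is completely positive, so $e^{s\mathcal G_1}\in\mathsf{CP}(d)\subseteq\mathcal S$. For the second factor I would expand $e^{s\Phi}=\sum_{n=0}^\infty\frac{s^n}{n!}\Phi^n$. Here $\Phi^0=\mathrm{id}\in\mathcal S$ because $\mathcal S$ is a semigroup with identity, and $\Phi^n\in\mathcal S$ for every $n\geq 1$ because $\mathcal S$ is closed under composition; as the coefficients $s^n/n!$ are nonnegative for $s\geq 0$ and $\mathcal S$ is a convex cone, each partial sum lies in $\mathcal S$, and closedness of $\mathcal S$ yields $e^{s\Phi}\in\mathcal S$ as the norm limit of these partial sums.

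With both factors in $\mathcal S$, their composition $e^{(t/n)\mathcal G_1}e^{(t/n)\mathcal G_2}$ lies in $\mathcal S$ by the semigroup property, and so does its $n$-fold power; applying closedness once more to the Trotter limit gives $e^{t\mathcal G}\in\mathcal S$. I expect the only genuinely delicate point to be the treatment of the \emph{drift} term $\mathcal G_1$: note that $\mathrm{id}+\varepsilon\mathcal G_1$ is in general \emph{not} completely positive (for anti-Hermitian $K$ it is a first-order Hamiltonian term whose Choi matrix has negative eigenvalues), so a naive Euler-type product $(\mathrm{id}+\tfrac{t}{n}\mathcal G)^n$ would not land in $\mathcal S$. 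It is precisely the \emph{exact} exponentiation of $\mathcal G_1$ into a genuine conjugation $e^{sK}(\cdot)(e^{sK})^\dagger$---made available by the Trotter splitting---that restores membership in $\mathsf{CP}(d)$, so the argument hinges on the operator splitting rather than on linearization.
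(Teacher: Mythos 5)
Your proof is correct and follows essentially the same route as the paper's: Lie--Trotter splitting of the generator, recognizing $e^{s(K(\cdot)+(\cdot)K^\dagger)}$ as the completely positive conjugation $e^{sK}(\cdot)(e^{sK})^\dagger\in\mathsf{CP}(d)\subseteq\mathcal S$, placing $e^{s\Phi}$ in $\mathcal S$ via its power series together with the cone and closedness properties, and then using the semigroup property plus closedness to conclude for the Trotter limit. (If anything, you are slightly more careful than the paper in handling the $j=0$ term of the series by explicitly invoking $\mathrm{id}\in\mathcal S$.)
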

    \begin{proof}
    Because $\Phi\in\mathcal S$, so is $\sum_{j=1}^m\frac{t^j}{j!}\Phi^j$ for all $m\in\mathbb N$ (semigroup property \& invariance under positive linear combinations as $t\geq 0$), and hence the same is true for its limit $e^{t\Phi}$ (closedness).
    Next, using the Lie-Trotter formula
    \begin{align*}
    e^{t(K(\cdot)+(\cdot)K^\dagger+\Phi)}&=\lim_{m\to\infty}\Big( e^{\frac tmK(\cdot)+(\cdot)\frac tmK^\dagger}e^{\frac tm\Phi} \Big)^m\\
    &=\lim_{m\to\infty}\Big( e^{\frac tmK}e^{\frac tm\Phi}(\cdot)(e^{\frac tmK})^\dagger \Big)^m
    \end{align*}
    we see that $(e^{\frac tmK}e^{\frac tm\Phi}(\cdot)(e^{\frac tmK}))^m$ is in $\mathcal S$ for all $m\in\mathbb N$, $t\geq 0$ (semigroup property \& $\mathsf{CP}(d)\subseteq\mathcal S$), hence the same is true for its limit (closedness).
    \end{proof}
Now, the basic idea of our method is as follows: start from some ($k$-)positive map $\Phi$ and randomly generate $K\in\mathbb C^{d\times d}$. Then, by Prop.~\ref{prop_exp_semigroup}, the semigroup $e^{t(K(\cdot)+(\cdot)K^\dagger+\Phi)}$ will again be ($k$-)positive for all $t\geq 0$. \textit{However}, it may happen that the resulting one-parameter semigroup is even completely positive for all times, although $\Phi$ itself was not completely positive. Examples for which this happens are the original Choi map \cite{Choi75b}, the Breuer--Hall map \cite{Breuer06,Breuer06b,Hall06}, and the Tomiyama maps \cite{TT83,Tomiyama83,Tomiyama85}.
    As this renders the entire approach pointless we have to identify an additional property on the initial map which guarantees that the corresponding semigroup is not always completely positive. This leads to the next proposition.
    \begin{proposition}\label{prop_CCP}
    Given any $\Phi\in\mathcal L(\mathbb C^{d\times d})$ Hermitian-preserving the following statements are equivalent.
    \begin{itemize}
        \item[(i)] $e^{t(K(\cdot)+(\cdot)K^\dagger+\Phi)}\in\mathsf{CP}(d)$ for all $K\in\mathbb C^{d\times d}$ and all $t\geq 0$
        \item[(ii)] $e^{t(K(\cdot)+(\cdot)K^\dagger+\Phi)}\in\mathsf{CP}(d)$ for some $K\in\mathbb C^{d\times d}$ and all $t\geq 0$
        \item[(iii)] There exists $K'\in\mathbb C^{d\times d}$ such that $\Phi-K'(\cdot)-(\cdot)(K')^\dagger\in\mathsf{CP}(d)$
        \item[(iv)] $\Phi$ is \textnormal{conditionally completely positive}, i.e., $$({\bf1}-|\Omega\rangle\langle\Omega|)\mathsf C(\Phi)({\bf1}-|\Omega\rangle\langle\Omega|)\geq 0$$ with $|\Omega\rangle:=\frac1{\sqrt d}\sum_{j=1}^d|jj\rangle$ the maximally entangled state.
    \end{itemize}
    \end{proposition}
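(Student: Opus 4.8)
The plan is to recognize $\mathcal G_K:=K(\cdot)+(\cdot)K^\dagger+\Phi$ as the generator of the one-parameter semigroup $e^{t\mathcal G_K}$ and to invoke the classical characterization of generators of completely positive semigroups (Gorini--Kossakowski--Sudarshan/Lindblad, without any trace-preservation constraint): for a Hermitian-preserving map $\Theta$, the semigroup $e^{t\Theta}$ lies in $\mathsf{CP}(d)$ for all $t\geq 0$ if and only if $\Theta$ is conditionally completely positive, i.e.\ $P\,\mathsf C(\Theta)\,P\geq 0$ with $P:={\bf1}-|\Omega\rangle\langle\Omega|$, and equivalently if and only if $\Theta$ decomposes as $\Theta=\Psi+G(\cdot)+(\cdot)G^\dagger$ for some $\Psi\in\mathsf{CP}(d)$ and some $G\in\mathbb C^{d\times d}$. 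Taking this theorem as given, the proposition reduces to two observations: first, that the drift term $K(\cdot)+(\cdot)K^\dagger$ can always be absorbed into $G$, so the parameter $K$ is immaterial; and second, a purely linear-algebraic lemma relating the decomposition in (iii) to the compressed-Choi condition in (iv).

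For the implications among (i)--(iii) I would argue cyclically. The implication (i)$\Rightarrow$(ii) is trivial. For (ii)$\Rightarrow$(iii), if $e^{t\mathcal G_K}\in\mathsf{CP}(d)$ for all $t\geq 0$ and some fixed $K$, then by the generator theorem $\mathcal G_K=\Psi+G(\cdot)+(\cdot)G^\dagger$ with $\Psi\in\mathsf{CP}(d)$; subtracting the drift gives $\Phi=\Psi+(G-K)(\cdot)+(\cdot)(G-K)^\dagger$, so $K':=G-K$ witnesses (iii). Conversely, for (iii)$\Rightarrow$(i), if $\Phi-K'(\cdot)-(\cdot)(K')^\dagger=:\Psi\in\mathsf{CP}(d)$, then for every $K$ we may write $\mathcal G_K=\Psi+(K+K')(\cdot)+(\cdot)(K+K')^\dagger$, which is exactly of the admissible generator form; hence $e^{t\mathcal G_K}\in\mathsf{CP}(d)$ for all $K$ and all $t\geq 0$, i.e.\ (i) holds. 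This closes the cycle, and the ``all $K$''/``some $K$'' distinction collapses precisely because the drift disappears into $G$.

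It remains to prove (iii)$\Leftrightarrow$(iv). Here I would first compute the Choi matrix of the drift: since $\mathsf C(K'(\cdot))=({\bf1}\otimes K')|\Gamma\rangle\langle\Gamma|$ and $|\Gamma\rangle=\sqrt d\,|\Omega\rangle$, the Choi matrix of $K'(\cdot)+(\cdot)(K')^\dagger$ equals $|u\rangle\langle\Omega|+|\Omega\rangle\langle u|$ with $u:=d\,({\bf1}\otimes K')|\Omega\rangle$. Crucially, as $K'$ ranges over all of $\mathbb C^{d\times d}$ the vector $u$ ranges over all of $\mathbb C^d\otimes\mathbb C^d$ (this is the vectorization isomorphism $K'\mapsto({\bf1}\otimes K')|\Gamma\rangle$). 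Therefore (iii) is equivalent to the existence of a vector $u$ with $\mathsf C(\Phi)-\big(|u\rangle\langle\Omega|+|\Omega\rangle\langle u|\big)\geq 0$, and the claim becomes the statement that such $u$ exists iff $P\,\mathsf C(\Phi)\,P\geq 0$. The forward direction is immediate: conjugating the inequality by $P$ annihilates the rank-two correction (because $P|\Omega\rangle=0$), leaving $P\,\mathsf C(\Phi)\,P\geq 0$. For the converse I would expand $\mathsf C(\Phi)$ in the block decomposition $\mathbb C|\Omega\rangle\oplus\operatorname{ran}P$, writing $\mathsf C(\Phi)=\begin{pmatrix}c & b^\dagger\\ b & A\end{pmatrix}$ with $c=\langle\Omega|\mathsf C(\Phi)|\Omega\rangle$, $A=P\,\mathsf C(\Phi)\,P\geq0$, and $b=P\,\mathsf C(\Phi)|\Omega\rangle$; choosing $u$ so that its $\operatorname{ran}P$-component equals $b$ cancels the off-diagonal blocks, and choosing its $|\Omega\rangle$-component real and negative enough makes the remaining $(1,1)$-entry nonnegative, producing a block-diagonal positive semi-definite matrix. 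This is a short Schur-complement/completion argument.

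The genuinely nontrivial input is the generator theorem for CP semigroups, which I would cite rather than reprove; everything else is elementary. Within the self-contained part, the main obstacle is the converse of the compression lemma, (iv)$\Rightarrow$(iii): one must actually \emph{construct} the witness $K'$, and the block-completion argument above is the natural way to do so. A secondary point requiring care is the bookkeeping of conventions---the factor $\sqrt d$ relating $|\Gamma\rangle$ and $|\Omega\rangle$, and the surjectivity of $K'\mapsto u$---so that the feasible set in (iii) and that of the linear-algebraic lemma genuinely coincide.
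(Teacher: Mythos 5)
Your proof is correct, and its core skeleton coincides with the paper's: (i)$\Rightarrow$(ii) trivially, (ii)$\Rightarrow$(iii) from the Christensen--Evans/Lindblad generator theorem by absorbing the drift (your $K'=G-K$), and then closing the cycle back to (i). There are two deviations worth noting. For (iii)$\Rightarrow$(i), the paper does not invoke the converse (``if'') direction of the generator theorem; it instead reuses its own Proposition~\ref{prop_exp_semigroup} (the Lie--Trotter semigroup argument) with $\mathcal S=\mathsf{CP}(d)$, so only the ``only if'' direction of the classical theorem is ever cited---your version leans on the full strength of the cited theorem, which is valid but slightly less economical given what the paper has already established. More substantially, for (iii)$\Leftrightarrow$(iv) the paper simply cites \cite{Wolf08b} and \cite[Thm.~14.7]{EL77}, whereas you actually prove it: you identify the Choi matrix of the drift as $|u\rangle\langle\Omega|+|\Omega\rangle\langle u|$ with $u=d({\bf1}\otimes K')|\Omega\rangle$ ranging over all of $\mathbb C^d\otimes\mathbb C^d$ (the vectorization isomorphism), obtain (iii)$\Rightarrow$(iv) by compressing with $P={\bf1}-|\Omega\rangle\langle\Omega|$ (which annihilates the drift's Choi matrix since $P|\Omega\rangle=0$), and obtain (iv)$\Rightarrow$(iii) via the explicit completion $u=\mu|\Omega\rangle+P\,\mathsf C(\Phi)|\Omega\rangle$ with real $\mu\leq\tfrac12\langle\Omega|\mathsf C(\Phi)|\Omega\rangle$, which renders $\mathsf C(\Phi)-|u\rangle\langle\Omega|-|\Omega\rangle\langle u|$ block diagonal with both blocks positive semi-definite. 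This Schur-complement-style construction is correct and is genuine added value: it replaces an external citation by a short, self-contained argument that explicitly constructs the witness $K'$ (recovered from $u$ by inverting the vectorization, up to the factor $\sqrt d$ you track correctly), in effect reproving the portion of the GKS/Lindblad theory that links the Lindblad form to conditional complete positivity.
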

    \begin{proof}
    ``(i) $\Rightarrow$ (ii)'': Trivial. ``(ii) $\Rightarrow$ (iii)'': Well-known form of generators of completely positive one-parameter semigroups \cite[Thm.~3.1]{CE79}. ``(iii) $\Rightarrow$ (i)'': Prop.~\ref{prop_exp_semigroup} for $\mathcal S=\mathsf{CP}(d)$.
    ``(iii) $\Leftrightarrow$ (iv)'': \cite{Wolf08b} or \cite[Thm.~14.7]{EL77}.
    \end{proof}
In Figure~\ref{figa} we give some geometric intuition regarding how positivity, complete positivity, and conditional complete positivity are related.

    \begin{figure}[!htb]
\centering
\includegraphics[width=0.5\textwidth]{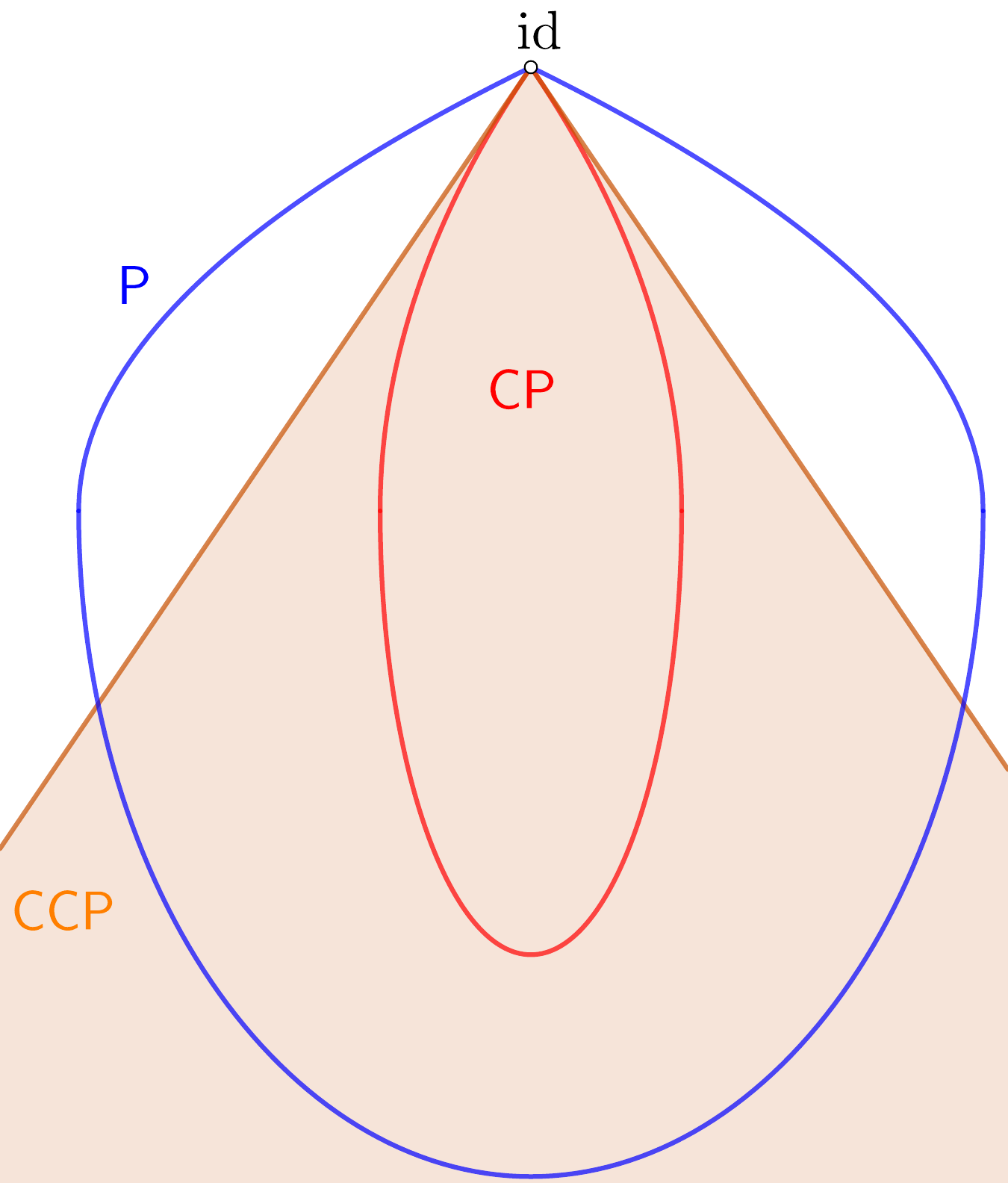}
\caption{
The set of conditionally completely positive maps (CCP) is the tangent cone of the set of completely positive (CP) maps at the identity ${\rm id}$ \cite{DHKS08}. Moreover, the CP cone sits strictly inside the cone of positive (P) maps. Therefore, a non-trivial portion of P is in CCP$\setminus$CP, meaning that if any element of P$\,\cap\,$CCP is taken as the generator in Prop.~\ref{prop_exp_semigroup}, then (and only then) will the corresponding semigroup be trivial (i.e., CP) for all positive times.
}\label{figa}
\end{figure}

\subsection{The workflow}

To summarize, we have a simple method to generate maps which are, by design, positive but not completely positive\footnote{From here on out, when we use the term ($k$-)positive we implicitly also mean that the map should not be completely positive.}. All we need for it is a starting point, that is, one positive map which is not conditionally completely positive. But for now it is not even clear that such a map exists (recall the examples before Prop.~\ref{prop_CCP}).
The next most obvious pick would be the transposition map: For $K=\frac{\bf1}2$, this results in $e^{t( T-{\rm id})}=\frac{1+e^{-2t}}2{\rm id}+\frac{1-e^{-2t}}2 T$ \cite[Ex.~6]{Chru14}. One might think that the trivial form of this semigroup is due to the poor choice of $K$, and to some extent that is correct, but this example hints at a larger problem we have not considered so far: We wanted our method to generate \textit{non-decomposable} positive maps; however, the set of decomposable maps satisfies all assumptions of Prop.~\ref{prop_exp_semigroup} (i.e., is a closed convex cone which is also a semigroup with identity, and contains all completely positive maps), as is readily verified. Hence, if the starting point $\Phi$ is decomposable, then the corresponding semigroup will always be decomposable, regardless of the chosen matrix $K$. This insight rules out not only the transposition map as a starting point, but also the reduction map $X\mapsto \tr(X){\bf1}-X$ (its transpose is completely positive) \cite{HH99}.
Putting everything together, this lets us arrive at the definitive workflow of our new method.
\begin{workflow}\label{w_newmethod}
\begin{enumerate}[(a)]
    \item Start with a map $\Phi$ which is $k$-positive, non-decomposable\footnote{
    Strictly speaking, we need $\Phi$ to not be ``conditionally decomposable'', i.e., there should not exist any $K\in\mathbb C^{d\times d}$ such that $\Phi-K(\cdot)-(\cdot)K^\dagger$ is decomposable \cite[Thm.~3]{Szczygielski23}.
    However, unlike with conditional complete positivity we are not aware of a simple test for this, which is why we instead settle for the \textit{necessary} condition of $\Phi$ being non-decomposable.
    \label{footnote_cond_decomp}}, and not conditionally completely positive
    \item Generate a random matrix $K$
    \item Compute $e^{t(K(\cdot)+(\cdot)K^\dagger+\Phi)}$. By (a), there exists $t_0>0$ such that this semigroup is $k$-, but not completely positive for all $t\in[0,t_0)$ (and finding $t_0$ is simple via the Choi matrix)
    \item Check that the semigroup is non-decomposable for small enough $t$ (this is an SDP, hence feasible to check, cf.~Sec.~\ref{sec_ppt2}).
    \item If the previous two points were successful, then we have found $t_0>0$ for which $e^{t_0(K(\cdot)+(\cdot)K^\dagger+\Phi)}$ is a non-decomposable $k$-positive map. If we also check conditional complete positivity (and perhaps lower $t_0$ accordingly), this map can be used as a new starting point for step (a).
\end{enumerate}
\end{workflow}
The last point shows that this new method even has the potential to \textit{indefinitely} produce new classes of non-decomposable positive maps.
This is in contrast to the only other method we are aware of to generate non-decomposable positive maps\footnote{Here and henceforth, when we say non-decomposable we implicitly mean that the map should be positive, as well.}, which is to start from a $k$-positive map $\Phi$ and tensor it with the $k$-dimensional identity (i.e., ${\rm id}_k\otimes\Phi$ is positive and non-decomposable) \cite{HLLM18}; notably, this method only works once and it also changes the dimension, perhaps drastically, depending on $k$.

All that is left now is to prove that this method works, i.e., we have to find a map which is neither decomposable nor conditionally completely positive, and then show that the corresponding semigroup is non-decomposable for small enough times. One of the very few papers we found which contains a map (or rather, a one-parameter family of maps) with these properties is \cite{Bhattacharya21}. Although going through Workflow~\ref{w_newmethod} with \cite[Eq.~(1)]{Bhattacharya21} 
seems to always give something decomposable (recall footnote~\ref{footnote_cond_decomp}), we were able to slightly tweak the map in question such that our method works as intended. The map
\begin{equation}\label{eq:bhatta_alt_dim3}
\Lambda_{\alpha}'(X) = \frac{1}{\alpha + \frac{1}{\alpha}}\begin{pmatrix}
    \alpha (x_{11}+x_{22}) & x_{12} & \alpha x_{13} \\
x_{21} & \frac{x_{22} + x_{33}}{\alpha} & x_{32} \\
\alpha x_{31} & x_{23} & \alpha x_{33}+\frac{x_{11}}{\alpha}
\end{pmatrix}
\end{equation}
is non-decomposable and not conditionally completely positive for all $\alpha\in(0,1]$. While the proof is analogous to the proof in \cite{Bhattacharya21}, we nonetheless present the key arguments in Appendix~D for the sake of completeness.

Here is the summary of the numerics we conducted with this map for the special case $\alpha=1$:
First, when setting $K=0$, the semigroup $(e^{t\Lambda_1'})_{t\geq 0}$ crosses into the completely positive cone at $t\approx 1.95$ and, more importantly, is non-decomposable for all $0<t\lessapprox 0.44$. When repeating this for randomly chosen $K$ the latter threshold fluctuated around $\approx 0.25-0.35$.
In a second step, we took $\tilde\Lambda:=e^{0.2\Lambda_1'}$ as the starting point for another iteration of Workflow~\ref{w_newmethod} and we found that, again for $K=0$ for simplicity, the semigroup $(e^{t\tilde\Lambda})_{t\geq 0}$ crosses into the completely positive cone at $t\approx 7.7$ and is non-decomposable for all $0<t\lessapprox 1.1$ (resp.~$\lessapprox 0.25-0.4$ for random $K$'s).
We tried this for a few more layers and found that this behavior continues, thus showing that this method can, with success, be used recursively. 

Another  advantage of~\eqref{eq:bhatta_alt_dim3} is 
a rather simple idea behind it. Namely,  transpose just a sub-block of the matrix and mix the diagonal, such that the resulting map is positive.
This allows for a generalization to higher dimensions (for simplicity for $\alpha=1$ and $d=4$): $\Phi:\mathbb C^{4\times 4}\to\mathbb C^{4\times 4}$, with $\Phi(X)$ defined via
\begin{equation}\label{eq:bhatta_gen_d4}
    \frac13{\small\begin{pmatrix}
        x_{11}+x_{22}+x_{33}&x_{12}&x_{13}&x_{14}\\
        x_{21}&x_{22}+x_{33}+x_{44}&x_{23}&x_{24}\\
        x_{31}&x_{32}&x_{33}+x_{44}+x_{11}&x_{43}\\
        x_{41}&x_{42}&x_{34}&x_{44}+x_{11}+x_{22}
    \end{pmatrix}}
\end{equation}
is almost certainly\footnote{Unlike with~\eqref{eq:bhatta_alt_dim3} we did not prove that~\eqref{eq:bhatta_gen_d4} is positive. However, we  numerically sampled five million vectors $\phi\in\mathbb C^4$ and verified that $\Phi(|\phi\rangle\langle\phi|)\geq 0$ every time. From this we conclude that with sufficiently high probability $\Phi$ is in fact positive.}
positive.
As in the previous case, we find that $e^{t\Phi}$ is non-decomposable until $t\approx 0.24$.\footnote{The code used to generate the maps is available in the ancillary files of the arXiv submission of the paper.}


Finally, the only other paper we are aware of which features a positive map that is neither decomposable nor conditionally completely positive is \cite{MO15}, resp.~\cite{RSC15} where it is generalized to arbitrary finite dimensions. From our testing, it also constitutes a viable input to Workflow~\ref{w_newmethod}~(a) in that the corresponding semigroup is positive and non-decomposable for small enough times.

\section{The PPT-square conjecture}\label{sec_ppt2}

As mentioned in the introduction, our method for generating positive maps can be used to test the PPT-square conjecture~\cite{RuskaiEtAl2012}, particularly in higher dimensions where the conjecture remains open.
To do so, we make use of the following formulation of the conjecture~\cite[Conjecture~IV.3]{CMW19}: For any completely positive and completely co-positive map $\Delta$ and any positive map $\Psi$, the composition $\Psi\circ\Delta$ is decomposable.

Decomposability, i.e., $\Phi=\Psi_1+\Psi_2\circ T$ for some $\Psi_1$, $\Psi_2$ completely positive, can be straightforwardly checked, via semi-definite programming, as a feasibility problem. One can make it into an optimization problem via a slack variable, as follows. Let $\Phi$ be a Hermitian-preserving linear map. Define the quantity
\begin{equation*}
    D(\Phi):=\min\Big\{\|S\|_1:\mathsf{C}(\Phi)=P_1+P_2^{T_2}+S,\,P_1\geq 0,\,P_2\geq 0\Big\},
\end{equation*}
where $\|\cdot\|_1$ is the trace norm, and $(\cdot)^{T_2}=({\rm id}\otimes T)(\cdot)$ is the partial transpose on the second subsystem. This is indeed a semi-definite program, on account of the fact that the trace norm of an arbitrary Hermitian operator $S$ can be expressed as $\|S\|_1=\min\big\{\tr[Y]:-Y\leq S\leq Y\big\}$, which is itself a semi-definite program. It is clear that $D(\Phi)\geq 0$ for all $\Phi$. In particular, we can use $D(\Phi)$ to determine decomposability, because
\begin{equation*}
    \Phi\text{ decomposable} \Longleftrightarrow D(\Phi)=0
\end{equation*}
(equivalently, $\Phi$ is non-decomposable if and only if $ D(\Phi)>0$).
Now, the aforementioned formulation of the PPT-square conjecture implies the following for a given positive map $\Psi$:
\begin{align*}
    \text{PPT-square conjecture true}&\,\Longrightarrow D(\Psi\circ\Delta)=0\quad\forall~\Delta\in\mathsf{CP}\cap\mathsf{CcoP}\\
    &\Longleftrightarrow \min_{\Delta\in\mathsf{CP}\cap\mathsf{CcoP}} D(\Psi\circ\Delta)=0,
\end{align*}
where $\mathsf{CcoP}$ denotes the set of completely co-positive linear maps. In other words, a given positive (non-decomposable) map $\Psi$ induces a counterexample to PPT-square if and only if $\min_{\Delta\in\mathsf{CP}\cap\mathsf{CcoP}}D(\Psi\circ\Delta)>0$. Note that this optimization is also an SDP:
\begin{multline}
    \min_{\Delta\in\mathsf{CP}\cap\mathsf{CcoP}}D(\Psi\circ\Delta)=\min\Big\{\|S\|_1:\mathsf{C}(\Psi)\star N=P_1+P_2^{T_2}+S,\,P_1\geq 0,\\P_2\geq 0,\,N\geq 0,\,N^{T_2}\geq 0\Big\},\label{eq:PPT2_SDP}
\end{multline}
where the variable $N$ in the optimization is the Choi representation of a completely positive and completely co-positive map $\Delta$, and
\begin{equation}
    \mathsf{C}(\Psi)\star N\equiv\tr_2\big[\big(N^{T_2}\otimes\mathbbm{1}\big)\big(\mathbbm{1}\otimes \mathsf C(\Psi)\big)\big]
\end{equation}
is the Choi representation of the composition $\Psi\circ\Delta$. 

By performing the optimization in \eqref{eq:PPT2_SDP} using the positive, non-decomposable maps $\Psi$ generated in Section~\ref{sec_newnondec}, after going through several hundred examples we were unfortunately unable to find a counterexample. Our code can be found in the ancillary files of the arXiv submission of our paper.

\section{Outlook}\label{sec_outlook}

The aims of this work were twofold. 
First, to analyze and extend the new characterization of $k$-positivity~\cite{MOM25} formulated via the completely positive map associated to a given positive map. We also  wanted to leverage this formulation to uncover structural and computational links -- specifically, to norms of order-3 tensors and to the separability problem. In particular, we encoded $k$-positivity as a family of bilinear optimization problems, which provides a basis for numerical procedures and yields new sufficient criteria for a Hermitian-preserving map to be $k$-positive.  One of the more obvious open questions here is: What is the best way to decide $k$-positivity in finite time (e.g., in a complexity sense)? After all, just because the corresponding decision problem is most likely NP-hard, this does not mean that some of the criteria, respectively the corresponding algorithms to compute the associated quantities, cannot scale better than others (e.g., exponential versus doubly exponential).

The second aim was to design a new Lie-semigroup inspired method to generate maps that are guaranteed to be non-decomposable ($k$-positive), so long as the map that is used to ``feed'' the corresponding algorithm has suitable properties (Workflow~\ref{w_newmethod}).
We illustrated this method with an example for $d=3$ and $d=4$ which numerically produced new non-decomposable maps, even recursively. Using a new SDP-based test of the PPT-square conjecture we, however, did not find any violation of the conjecture. It should be noted though  that there is an inherent flaw to this method, which is that it can never produce any non-Markovian maps because it relies on one-parameter semigroups. 
Thus, an interesting follow-up question appears: What portion of the Markovian non-decomposable maps can be explored with the new method? After all, the further from the identity a target map is, the less likely it is that there is a semigroup that connects the map to the identity -- and if the non-decomposable maps have any disconnected components, then these can, by design, not be explored with our new method. One more thing to mention is that if one wants to find non-decomposable \textit{trace-preserving} maps, all one has to do is restrict $K$ in Workflow~\ref{w_newmethod} to $K=iH-\frac12\Phi^*({\bf1})$ and instead sample random Hermitian matrices $H$.

Finally, although Workflow~\ref{w_newmethod} ensures that the generated semigroup remains in the same $k$-positivity class as the seed, it does not specify when---or even whether---the trajectory passes through intermediate classes on the way to complete positivity. The obstacle is the lack of an \emph{effective} certification procedure (even in modest dimensions) for $(k{+}1)$-, $(k{+}2)$-, $\ldots$, $(d{-}1)$-positivity.\footnote{By contrast, \emph{refuting} $k$-positivity is much easier: It suffices to find a vector $\psi$ of Schmidt rank $k$ with $\langle\psi\,|\,\mathsf C(\Phi)\,|\,\psi\rangle<0$; see Prop.~\ref{prop_CP_P_char}.} This limitation is especially relevant when searching for violations of the Sanpera--Bru\ss--Lewenstein conjecture because in order to generate a non-decomposable $(d{-}1)$-positive map, unless we have an effective method to certify $d{-}1$ positivity we have to start from a seed that is $(d{-}1)$-positive and non-decomposable, thus leading to a circular dependence.

\section*{Acknowledgments}

We thank Alexander M\"uller-Hermes for clarifying remarks on the PPT-square and the Sanpera conjecture \cite{MH25_priv}, as well as Dariusz Chru{\'s}ci{\'n}ski for making us aware of the papers \cite{MO15,RSC15} as possible inputs for our Workflow~\ref{w_newmethod}.
FvE is funded by the \textit{Deutsche Forschungsgemeinschaft} (DFG, German Research Foundation) -- project number 384846402, and supported by the Einstein Foundation (Einstein Research Unit on Quantum Devices) and the MATH+ Cluster of Excellence. SD acknowledges support from COST Action CA24109 – ``Many-body Open Quantum Systems (QOpen)''.


\section*{Appendix A: Proof of Proposition~\ref{prop_CP_P_char}}

\begin{proof}
``(i) $\Leftrightarrow$ (ii)'':
Follows from \cite[Prop.~3]{CK09}.
``(ii) $\Leftrightarrow$ (iii)'': A straightforward computation shows $\langle z|\mathsf C(\Phi)|z\rangle=
\langle\overline{{\rm vec}^{-1}(z)}\otimes{\rm vec}^{-1}(z)|\widehat\Phi\rangle_{\rm HS}$ for all $z$; one way to see this is to use that every linear map $\Phi$ can be written as $\sum_i A_i(\cdot)B_i^\dagger$ \cite[Coro.~2.21]{Watrous18}, together with the facts $\mathsf C(A(\cdot)B^\dagger)=|{\rm vec}(A)\rangle\langle{\rm vec}(B)|$ and $\widehat{A(\cdot)B^\dagger}=\overline B\otimes A$.
Moreover, $z\in\mathbb C^m\otimes\mathbb C^d$ has Schmidt rank $k$ if and only if ${\rm vec}^{-1}(z)$ has rank $k$ as the Schmidt decomposition is just a vectorized singular value decomposition.
``(iii) $\Leftrightarrow$ (iv)'' Combine the previous equivalence with \cite[Lemma~2]{vE24_decomp_findim}.
``(iv) $\Leftrightarrow$ (v)'': Because $\tr(\Phi(X(\cdot)X^\dagger))\geq 0$ for all $X\in\mathbb C^{m\times d}$ with rank at most $k$, and because $X(\cdot)X^\dagger$ is completely positive, this condition readily extends---by linearity---to all completely positive maps which admit a set of Kraus operators\footnote{
 The convention we choose here and henceforth is that $\{K_j\}_j$ are Kraus operators of $\Phi$ if $\Phi=\sum_jK_j(\cdot)K_j^\dagger$.
 }
 of rank at most $k$.
\end{proof}

\section*{Appendix B: Proof of Lemma~\ref{lemma_char_KyFan}}

\begin{proof}
In what follows let $B=\sum_jb_j|g_j\rangle\langle g_j|$ be any diagonalization of $B$ where $b_1\geq b_2\geq \ldots\geq 0$ and $\{g_j\}_j\subset\mathbb C^d$ is an orthonormal basis. This is possible because $B\geq0$ by assumption.
``$\leq$'': Defining $M_B:=\sum_{j=1}^k|g_j\rangle\langle g_j|$ one finds
\begin{align*}
    \|B\|_{(k)}=\sum_{j=1}^kb_j=\tr(BM_B)\leq \max_{\substack{M\in\mathbb C^{d\times d}\\0\leq M\leq{\bf1},\tr(M)\leq k}}
 \tr(BM)\,,
\end{align*}
as desired.
``$\geq$'': Let any $M\in\mathbb C^{d\times d}$ be given such that $0\leq M\leq{\bf1}$ and $\tr(M)\leq k$. Moreover, let $m_1\geq m_2\geq\ldots\geq 0$ denote the eigenvalues of $M$. Then
$
\tr(BM)\leq \sum_j b_jm_j
$
as a consequence of von Neumann's trace inequality \cite[Ch.~9, Thm.~H.1.g]{MarshallOlkin}.
This implies
\begin{align*}
\tr(BM)&\leq\sum_{j=1}^{k-1}b_jm_j+\sum_{j=k}^db_jm_j\\
&\leq \sum_{j=1}^{k-1}b_jm_j+b_k\sum_{j=k}^dm_j=\sum_{j=1}^{k-1}b_jm_j+b_k\Big(\tr(M)-\sum_{j=1}^{k-1}m_j\Big)\,.
\end{align*}
Next we use that $\tr(M)\leq k\,$:
\begin{align*}
 \tr(BM)&\leq \sum_{j=1}^{k-1}b_jm_j+b_k\Big(k-\sum_{j=1}^{k-1}m_j\Big)=kb_k+\sum_{j=1}^{k-1}(b_j-b_k)m_j
 \end{align*}
 Finally---using that $m_j\leq 1$ due to $M\leq{\bf1}$ because assumption---as $b_j-b_k\geq 0$ for all $j\leq k$ we can estimate
 \begin{align*}
\tr(BM)&\leq kb_k+\sum_{j=1}^{k-1}(b_j-b_k)=kb_k+\sum_{j=1}^{k-1}b_j-(k-1)b_k=\sum_{j=1}^{k}b_j=\|B\|_{(k)}\,.
\end{align*}
Because $M$ was chosen arbitrarily this yields the desired inequality.
\end{proof}
\section*{Appendix C: Proof of Theorem~\ref{thm1a}}
\begin{proof}
Obviously, a Hermitian-preserving map $\Phi$ is $k$-positive if and only if
$$
\min_{\substack{x,y\in\mathbb C^d\otimes\mathbb C^k\\ \|x\|=\|y\|=1}}\langle x|(\Phi\otimes{\rm id}_k)(|y\rangle\langle y|)|x\rangle\geq 0\,.
$$
Equivalently, $ k\max( \sigma( \mathsf C(\Phi) ) )-\min_{x,y\in\mathbb C^d\otimes\mathbb C^k,\|x\|=\|y\|=1}\langle x|(\Phi\otimes{\rm id}_k)(|y\rangle\langle y|)|x\rangle$ cannot exceed $k\max( \sigma( \mathsf C(\Phi) ) )$. The expression on the left-hand side can be re-written further:
\begin{align}
    k&\max( \sigma( \mathsf C(\Phi) ) )-\min_{\substack{x,y\in\mathbb C^d\otimes\mathbb C^k\\ \|x\|=\|y\|=1}}\langle x|(\Phi\otimes{\rm id}_k)(|y\rangle\langle y|)|x\rangle\notag\\
    &=\max_{\substack{x,y\in\mathbb C^d\otimes\mathbb C^k\\ \|x\|=\|y\|=1}}\big(k\max( \sigma( \mathsf C(\Phi) ) )-\langle x|(\Phi\otimes{\rm id}_k)(|y\rangle\langle y|)|x\rangle\big)\notag\\
    &=\max_{\substack{x,y\in\mathbb C^d\otimes\mathbb C^k\\ \|x\|=\|y\|=1}}\big(k\max( \sigma( \mathsf C(\Phi) ) )\tr(|y\rangle\langle y|)\langle x|{\bf1}|x\rangle-\langle x|(\Phi\otimes{\rm id}_k)(|y\rangle\langle y|)|x\rangle\big)\notag\\
    &=\max_{\substack{x,y\in\mathbb C^d\otimes\mathbb C^k\\ \|x\|=\|y\|=1}}\Big\langle x\Big|\Big(k\max( \sigma( \mathsf C(\Phi) ) )\tr(\cdot){\bf1}-\Phi\otimes{\rm id}_k\Big)(|y\rangle\langle y|)\Big|x\Big\rangle\label{eq:thm1a_2}
\end{align}
Next, we claim that $k\max( \sigma( \mathsf C(\Phi) ) )\tr(\cdot){\bf1}-\Phi\otimes{\rm id}_k$ is completely positive: First,
\begin{align}
    \mathsf C\Big(k\max( \sigma( \mathsf C(\Phi) ) )\tr(\cdot){\bf1}-\Phi\otimes{\rm id}_k\Big)
&=k\max( \sigma( \mathsf C(\Phi) ) ){\bf 1}-\mathsf C(\Phi\otimes{\rm id}_k)\,.\label{eq:thm2_1}
\end{align}
Moreover, $\mathsf C(\Phi\otimes{\rm id}_k)=({\bf1}\otimes\mathbb F\otimes{\bf1})(\mathsf C(\Phi)\otimes\mathsf C({\rm id}_k))({\bf1}\otimes\mathbb F\otimes{\bf1})$ (with $\mathbb F$ the flip on the second and third system) as is readily verified, meaning the largest eigenvalue of $\mathsf C(\Phi\otimes{\rm id}_k)$ equals the largest eigenvalue of $\mathsf C(\Phi)\otimes\mathsf C({\rm id}_k)=\mathsf C(\Phi)\otimes|\Gamma\rangle\langle\Gamma|$ with $|\Gamma\rangle=\sum_{j=1}^k|jj\rangle$ the (unnormalized) maximally entangled state in $k$ dimensions. But this eigenvalue, in turn, is just $\max(\sigma(\mathsf C(\Phi)))\|\Gamma\|^2=k\max(\sigma(\mathsf C(\Phi)))$ which shows that~\eqref{eq:thm2_1} is positive semi-definite (because $\mathsf C(\Phi)$ is Hermitian by assumption), as desired.
Thus, complete positivity of said map implies the existence of $V:\mathbb C^d\otimes\mathbb C^k\to\mathbb C^d\otimes\mathbb C^k\otimes\mathbb C^r$ linear such that $k\max( \sigma( \mathsf C(\Phi) ) )\tr(\cdot){\bf1}-\Phi\otimes{\rm id}_k=\tr_3(V(\cdot)V^\dagger)$ \cite[Thm.~2.22]{Watrous18}. Hence~\eqref{eq:thm1a_2} is equal to
\begin{align*}
    \max_{\substack{x,y\in\mathbb C^d\otimes\mathbb C^k\\ \|x\|=\|y\|=1}}\langle x|\tr_3(V|y\rangle\langle y|V^\dagger)|x\rangle&=
    \max_{\substack{x,y\in\mathbb C^d\otimes\mathbb C^k\\ \|x\|=\|y\|=1}}\tr\big(|x\rangle\langle x|\tr_3(V|y\rangle\langle y|V^\dagger)\big)\\
    &=\max_{\substack{x,y\in\mathbb C^d\otimes\mathbb C^k\\ \|x\|=\|y\|=1}}\tr\big((|x\rangle\langle x|\otimes{\bf1})V|y\rangle\langle y|V^\dagger\big)\\
    &=\max_{\substack{x,y\in\mathbb C^d\otimes\mathbb C^k\\ \|x\|=\|y\|=1}}\langle y|V^\dagger(|x\rangle\langle x|\otimes{\bf1})V|y\rangle\\
    &=\max_{x\in\mathbb C^d\otimes\mathbb C^k,\|x\|=1}\|V^\dagger(|x\rangle\langle x|\otimes{\bf1})V\|_\infty
\end{align*}
In the last step we used that $V^\dagger(|x\rangle\langle x|\otimes{\bf1})V\geq 0$. But $$V^\dagger(|x\rangle\langle x|\otimes{\bf1})V=V^\dagger(|x\rangle\otimes{\bf1})(\langle x|\otimes{\bf1})V=V^\dagger(|x\rangle\otimes{\bf1})(V^\dagger(|x\rangle\otimes{\bf1}))^\dagger$$
so---using the previously employed $C^*$-identity $\|AA^\dagger\|_\infty=\|A\|_\infty^2$---this shows that~\eqref{eq:thm1a_2} is the maximum of $\|V^\dagger(|x\rangle\otimes{\bf1})\|_\infty^2$ taken over all unit vectors $x$.
In summary, $\Phi$ is $k$-positive if and only if
$$
\max_{x\in\mathbb C^d\otimes\mathbb C^k,\|x\|=1}\|V^\dagger(|x\rangle\otimes{\bf1})\|_\infty^2\leq  k\max( \sigma( \mathsf C(\Phi) ) )
$$
which, in turn, is equivalent to condition~(ii). This concludes the proof.
\end{proof}

\begin{remark}
    It is worth pointing out that, as this proof shows,
    the difference between $ k\max( \sigma( \mathsf C(\Phi) ) )$ and $\max_{x\in\mathbb C^d\otimes\mathbb C^k,\|x\|=1}\|V^\dagger(| x\rangle\otimes{\bf1})\|_\infty^2$ is given exactly by the smallest value $\langle x|(\Phi\otimes{\rm id}_k)(|y\rangle\langle y|)|x$ can take (when varying $x,y$ over the unit sphere).
    In particular, $$\max_{x\in\mathbb C^d\otimes\mathbb C^k,\|x\|=1}\|V^\dagger(| x\rangle\otimes{\bf1})\|_\infty=\sqrt{ k\max( \sigma( \mathsf C(\Phi) ) )}$$ if and only if there exist non-zero vectors $x,y$ such that $\langle y|\Phi(|x\rangle\langle x|)|y\rangle=0$.
    As an example, for the transposition map $\langle 0|(|1\rangle\langle 1|)^T|0\rangle=0$ so $$\max_{x\in\mathbb C^d\otimes\mathbb C^k,\|x\|=1}\|V^\dagger(| x\rangle\otimes{\bf1})\|_\infty=\sqrt{ k\max( \sigma( \mathsf C(\Phi) ) )}=\sqrt k$$ for all $d\geq 2, 1\leq k\leq d$.
\end{remark}

\section*{Appendix D: Proof of properties of~\eqref{eq:bhatta_alt_dim3}}

First, it suffices to check positivity on the extreme points of the cone of positive semi-definite matrices, i.e., $\Lambda_\alpha'(|\phi\rangle\langle\phi|)\geq 0$ for all $\phi\in\mathbb C^3,\alpha\in(0,1]$. Using Sylvester's criterion \cite[Thm.~7.2.5~(a)]{HJ1_2nd_ed}, this boils down to verifying that all principal minors of the Hermitian matrix
\begin{align*}
\Lambda_\alpha'(|\phi\rangle\langle\phi|)=\frac{1}{\alpha+\frac1\alpha}\begin{pmatrix}
    \alpha (|\phi_1|^2+|\phi_2|^2) & \phi_1\phi_2^* & \alpha \phi_1\phi_3^* \\
\phi_2\phi_1^* & \frac{|\phi_2|^2 + |\phi_3|^2}{\alpha} & \phi_3\phi_2^* \\
\alpha \phi_3\phi_1^* & \phi_2\phi_3^* & \alpha |\phi_3|^2+\frac{|\phi_1|^2}{\alpha}
\end{pmatrix}
\end{align*}
are non-negative. Without loss of generality we may assume $\phi_1,\phi_2,\phi_3\neq 0$ (this special case then extends to the general case by continuity). Now the smallest principle minors are the diagonal entries, which are obviously non-negative.
Next are the principle submatrices of size two:
\begin{align*}
    \begin{vmatrix}
        \alpha (|\phi_1|^2+|\phi_2|^2) & \phi_1\phi_2^*  \\
\phi_2\phi_1^* & \frac{|\phi_2|^2 + |\phi_3|^2}{\alpha} 
    \end{vmatrix}=|\phi_2|^4+|\phi_1|^2|\phi_3|^2+|\phi_2|^2|\phi_3|^2\geq 0\,,
\end{align*}
and similarly for the other two. Finally, the determinant of $\Lambda_\alpha'(|\phi\rangle\langle\phi|)$ comes out to be
{\small\begin{align*}
    \frac1\alpha|\phi_1|^2|\phi_2|^4+\frac1\alpha|\phi_1|^4|\phi_3|^2+\alpha|\phi_2|^2|\phi_3|^4+2\alpha|\phi_1|^2\Re(\phi_2^2(\phi_3^*)^2)+\Big(\frac1\alpha-2\alpha\Big)|\phi_1\phi_2\phi_3|^2.
\end{align*}}
Now we lower bound $\Re(\phi_2^2(\phi_3^*)^2)\geq -|\phi_2|^2|\phi_3|^2$ as well as $\frac1\alpha\geq\alpha$ (because $\alpha\in(0,1]$) to find
\begin{align*}
\det(\Lambda_\alpha'(|\phi\rangle\langle\phi|))&\geq \alpha |\phi_1|^2|\phi_2|^4+\alpha|\phi_1|^4|\phi_3|^2+\alpha|\phi_2|^2|\phi_3|^4-3\alpha|\phi_1\phi_2\phi_3|^2\\
&= 3\alpha|\phi_1\phi_2\phi_3|^2\Big( \frac{ \frac{|\phi_2|^2}{|\phi_3|^2}+\frac{|\phi_1|^2}{|\phi_2|^2}+\frac{|\phi_3|^2}{|\phi_1|^2} }{3}-1 \Big)\\
&\geq 3\alpha|\phi_1\phi_2\phi_3|^2\Big(  \sqrt[3]{\frac{|\phi_2|^2}{|\phi_3|^2}\cdot\frac{|\phi_1|^2}{|\phi_2|^2}\cdot\frac{|\phi_3|^2}{|\phi_1|^2}}-1\Big)=0\,.
\end{align*}
In the last step, we used the inequality between arithmetic and geometric mean (AM-GM). Altogether, this shows that $\Lambda_\alpha'(|\phi\rangle\langle\phi|)\geq 0$ for all $\phi$ and all $\alpha\in(0,1]$, hence $\Lambda_\alpha'$ is positive, as claimed.

Next, for lack of conditional complete positivity we note that $\mathsf C(\Lambda_\alpha')$ has
$$
\begin{pmatrix}
    0&1\\1&\frac1\alpha
\end{pmatrix}
$$
as a (obviously indefinite) principal submatrix. This is not affected by multiplying with $({\bf1}-|\Omega\rangle\langle\Omega|)$, and hence it is also present in $({\bf1}-|\Omega\rangle\langle\Omega|)\mathsf C(\Lambda_\alpha')({\bf1}-|\Omega\rangle\langle\Omega|)$. Thus Prop.~\ref{prop_CCP} shows that $\Lambda_\alpha'$ is not conditionally completely positive.

Finally, let us verify that $\Lambda_\alpha'$ is non-decomposable. This is done using the following family of PPT states:
\begin{align*}
 \tau_x:=\frac3{1+x+x^{-1}}   \begin{pmatrix}
        1&0&0&0&1&0&0&0&1\\
        0&x&0&0&0&0&0&0&0\\
        0&0&x^{-1}&0&0&0&0&0&0\\
        0&0&0&x^{-1}&0&0&0&0&0\\
        1&0&0&0&1&0&0&0&1\\
        0&0&0&0&0&x&0&0&0\\
        0&0&0&0&0&0&x&0&0\\
        0&0&0&0&0&0&0&x^{-1}&0\\
        1&0&0&0&1&0&0&0&1
    \end{pmatrix}\,,
\end{align*}
$x>0$. The principal submatrix of
$({\rm id}\otimes\Lambda_\alpha')(\tau_x)$ consisting of rows 1, 5, and 9 is readily verified to (up to global factors) be given by
\begin{align*}
    \begin{pmatrix}
 \alpha (x+1) & 1 & \alpha \\
 1 & \frac{x+1}{\alpha} & 0 \\
 \alpha & 0 & \frac{x}{\alpha}+\alpha 
    \end{pmatrix}\,,
\end{align*}
so the corresponding determinant (principal minor) is $\frac{(x+2) x^2}{\alpha}+\alpha (x^2+x-1)$. Note that in the limit $x\to 0^+$ this becomes $-\alpha$ (which is $<0$ by assumption) meaning there exists $x_0=x_0(\alpha)>0$ such that $({\rm id}\otimes\Lambda_\alpha')(\tau_{x_0})$ is not a state. But because $\tau_{x_0}$ is a PPT state, this means that $\Lambda_\alpha'$ cannot be decomposable which concludes the proof.

\bibliographystyle{mystyle}
\bibliography{control21vJan20.bib}

\begin{thebibliography}{10}

\bibitem{BPR06}
Basu, S., Pollack, R., and Roy, M.-F., \emph{{Algorithms in Real Algebraic
  Geometry}}, Springer, Berlin, Heidelberg, 2006, 2 ed.

\bibitem{Bengtsson17}
Bengtsson, I. and {\.Z}yczkowski, K., \emph{{Geometry of Quantum States: An
  Introduction to Quantum Entanglement}}, Cambridge University Press,
  Cambridge, 2017, 2 ed.

\bibitem{berta2022bilinearSDPhierarchy}
Berta, M., Borderi, F., Fawzi, O., and Scholz, V., \emph{Math. Program.}
  \textbf{194} (2022), 781.

\bibitem{Bhatia07}
Bhatia, R., \emph{{Positive Definite Matrices}}, Princeton University Press,
  Princeton, 2007.

\bibitem{Bhattacharya21}
Bhattacharya, B., Goswami, S., Mundra, R., Ganguly, N., Chakrabarty, I.,
  Bhattacharya, S., and Majumdar, A., \emph{J. Phys. Commun.} \textbf{5}
  (2021), 065008.

\bibitem{Breuer06}
Breuer, H.-P., \emph{Phys. Rev. Lett.} \textbf{97} (2006), 080501.

\bibitem{Breuer06b}
Breuer, H.-P., \emph{J. Phys. A: Math. Theor.} \textbf{39} (2006), 11847.

\bibitem{Breuer16}
Breuer, H.-P., Laine, E.-M., Piilo, J., and Vacchini, B., \emph{Rev. Mod.
  Phys.} \textbf{88} (2016), 021002.

\bibitem{CYT19}
Chen, L., Yang, Y., and Tang, W.-S., \emph{Phys. Rev. A} \textbf{99} (2019),
  012337.

\bibitem{Choi75}
Choi, M.-D., \emph{Linear Algebra Appl.} \textbf{10} (1975), 285.

\bibitem{Choi75b}
Choi, M.-D., \emph{Linear Algebra Appl.} \textbf{12} (1975), 95.

\bibitem{CMW19}
Christandl, M., M\"{u}ller-Hermes, A., and Wolf, M., \emph{Annales Henri
  Poincar\'{e}} \textbf{20} (2019), 2295.

\bibitem{CE79}
Christensen, E. and Evans, D., \emph{J. London Math. Soc.} \textbf{s2-20}
  (1979), 358.

\bibitem{Chru14}
Chru{\'s}ci{\'n}ski, D., \emph{Open Syst. Inf. Dyn.} \textbf{21} (2014),
  1440004.

\bibitem{CK09}
Chru{\'s}ci{\'n}ski, D. and Kossakowski, A., \emph{Commun. Math. Phys.}
  \textbf{290} (2009), 1051.

\bibitem{CS14}
Chru\'{s}ci\'{n}ski, D. and Sarbicki, G., \emph{J. Phys. A: Math. Theor.}
  \textbf{47} (2014), 483001.

\bibitem{ChruscinskiComment16}
Chru{\'s}ci{\'n}ski, D. and Sarbicki, G., \emph{J. Phys. A: Math. Theor.}
  \textbf{49} (2016), 218003.

\bibitem{CYZ18}
Collins, B., Yin, Z., and Zhong, P., \emph{J. Phys. A: Math. Theor.}
  \textbf{51} (2018), 425301.

\bibitem{DHKS08}
Dirr, G., Helmke, U., Kurniawan, I., and Schulte-Herbr{\"u}ggen, T., \emph{Rep.
  Math. Phys.} \textbf{64} (2009), 93.

\bibitem{EL77}
Evans, D. and Lewis, J., \emph{{Dilations of Irreversible Evolutions in
  Algebraic Quantum Theory}}, Dublin Institute for Advanced Studies, Dublin,
  1977.

\bibitem{Gharibian10}
Gharibian, S., \emph{Quantum Inf. Comput.} \textbf{10} (2010), 0343.

\bibitem{Gurvits03}
Gurvits, L., in: \emph{{Proceedings of the Thirty-Fifth Annual ACM Symposium on
  Theory of Computing}}, 10--19, Association for Computing Machinery, New York,
  2003.

\bibitem{GT09}
Gühne, O. and Tóth, G., \emph{Phys. Rep.} \textbf{474} (2009), 1.

\bibitem{Hall06}
Hall, W., \emph{J. Phys. A: Math. Theor.} \textbf{39} (2006), 14119.

\bibitem{HS81}
Henderson, H. and Searle, S., \emph{Lin. Multilin. Alg.} \textbf{9} (1981),
  271.

\bibitem{HL13}
Hillar, C. and Lim, L.-H., \emph{J. ACM} \textbf{60} (2013), 1.

\bibitem{Holevo12_2nd_ed}
Holevo, A., \emph{{Quantum Systems, Channels, Information: A Mathematical
  Introduction}}, DeGruyter, Berlin, 2019, 2nd ed.

\bibitem{HJ2}
Horn, R. and Johnson, C., \emph{{Topics in Matrix Analysis}}, Cambridge
  University Press, Cambridge, 1991.

\bibitem{HJ1_2nd_ed}
Horn, R. and Johnson, C., \emph{{Matrix Analysis}}, Cambridge University Press,
  Cambridge, 2013, 2nd ed.

\bibitem{HH99}
Horodecki, M. and Horodecki, P., \emph{Phys. Rev. A} \textbf{59} (1999), 4206.

\bibitem{HHH96}
Horodecki, M., Horodecki, P., and Horodecki, R., \emph{Phys. Lett. A}
  \textbf{223} (1996), 1.

\bibitem{HSR03}
Horodecki, M., Shor, P., and Ruskai, M., \emph{Rev. Math. Phys.} \textbf{15}
  (2003), 629.

\bibitem{HHHH07}
Horodecki, R., Horodecki, P., Horodecki, M., and Horodecki, K., \emph{Rev. Mod.
  Phys.} \textbf{81} (2009), 865.

\bibitem{HLLM18}
Huber, M., Lami, L., Lancien, C., and M\"uller-Hermes, A., \emph{Phys. Rev.
  Lett.} \textbf{121} (2018), 200503.

\bibitem{HKT20}
Huber, S., K\"onig, R., and Tomamichel, M., \emph{IEEE Trans. Information
  Theory} \textbf{66} (2020), 2934.

\bibitem{Jamiolkowski72}
Jamio\l{}kowski, A., \emph{Rep. Math. Phys.} \textbf{3} (1972), 275.

\bibitem{JY20}
Jin, R. and Yang, Y., \emph{{Investigation of the PPT Squared Conjecture for
  High Dimensions}}, 2020, arXiv:2010.15554.

\bibitem{Johnston13}
Johnston, N., \emph{Phys. Rev. A} \textbf{87} (2013), 064302.

\bibitem{JK10}
Johnston, N. and Kribs, D., \emph{J. Math. Phys.} \textbf{51} (2010), 082202.

\bibitem{JK11}
Johnston, N. and Kribs, D., \emph{Quantum Inf. Comput.} \textbf{11} (2011),
  0104.

\bibitem{KMP18}
Kennedy, M., Manor, N., and Paulsen, V., \emph{Quantum Inf. Comput.}
  \textbf{18} (2018), 0472.

\bibitem{KW20}
Khatri, S. and Wilde, M., \emph{{Principles of Quantum Communication Theory: A
  Modern Approach}}, 2020, arXiv:2011.04672.

\bibitem{LHW18}
Li, L., Hall, M., and Wiseman, H., \emph{Phys. Rep.} \textbf{759} (2018), 1.

\bibitem{LHPD25}
Liu, Y.-C., Halbey, J., Pokutta, S., and Designolle, S., \emph{{A Unified
  Toolbox for Multipartite Entanglement Certification}}, 2025,
  arXiv:2507.17435.

\bibitem{Marciniak10}
Marciniak, M., \emph{Banach Cent. Publ.} \textbf{89} (2010), 201.

\bibitem{MarshallOlkin}
Marshall, A., Olkin, I., and Arnold, B., \emph{{Inequalities: Theory of
  Majorization and Its Applications}}, Springer, New York, 2011, 2 ed.

\bibitem{MO15}
Miller, M. and Olkiewicz, R., \emph{Open Syst. Inf. Dyn.} \textbf{22} (2015),
  1550011.

\bibitem{Mozrzymas15}
Mozrzymas, M., Rutkowski, A., and Studzi{\'n}ski, M., \emph{J. Phys. A: Math.
  Theor.} \textbf{48} (2015), 395302.

\bibitem{MH25_priv}
Müller-Hermes, A., private communication, 2025.

\bibitem{MOM25}
Młynik, T., Osaka, H., and Marciniak, M., \emph{Commun. Math. Phys.}
  \textbf{406} (2025), 62.

\bibitem{NP25}
Nechita, I. and Park, S., \emph{Ann. Henri Poincar{\'e}}  (2025), published
  online.

\bibitem{PSJ21}
Pastuszak, G., Skowyrski, A., and Jamio\l{}kowski, A., \emph{Quantum Inf.
  Process.} \textbf{20} (2021), 123.

\bibitem{PSJ22}
Pastuszak, G., Skowyrski, A., and Jamio\l{}kowski, A., \emph{Commun. Contemp.
  Math.} \textbf{24} (2022), 2050092.

\bibitem{RJP18}
Rahaman, M., Jaques, S., and Paulsen, V., \emph{J. Math. Phys.} \textbf{59}
  (2018), 062201.

\bibitem{Rana13}
Rana, S., \emph{Phys. Rev. A} \textbf{87} (2013), 054301.

\bibitem{Plenio_NMarkov_Review2014}
Rivas, A., Huega, S., and Plenio, M., \emph{Rep. Prog. Phys.} \textbf{77}
  (2014), 094001.

\bibitem{RuskaiEtAl2012}
Ruskai, M., Junge, M., Kribs, D., Hayden, P., and Winter, A., \emph{{Operator
  Structures in Quantum Information Theory}}, 2012, final Report, found at
  \url{https://web.archive.org/web/20240421181602/https://www.birs.ca/workshops/2012/12w5084/report12w5084.pdf}.

\bibitem{RSC15}
Rutkowski, A., Sarbicki, G., and Chru{\'s}ci{\'n}ski, D., \emph{Open Syst. Inf.
  Dyn.} \textbf{22} (2015), 1550016.

\bibitem{SBL01}
Sanpera, A., Bru\ss{}, D., and Lewenstein, M., \emph{Phys. Rev. A} \textbf{63}
  (2001), 050301.

\bibitem{STV98}
Sanpera, A., Tarrach, R., and Vidal, G., \emph{Phys. Rev. A} \textbf{58}
  (1998), 826.

\bibitem{SN22}
Singh, S. and Nechita, I., \emph{Ann. Henri Poincar{\'e}} \textbf{23} (2022),
  3311.

\bibitem{Stormer86}
St\o{}rmer, E., \emph{J. Funct. Anal.} \textbf{66} (1986), 235.

\bibitem{Stormer11}
St\o{}rmer, E., \emph{Pacific J. Math.} \textbf{252} (2011), 487.

\bibitem{Stormer13}
St\o{}rmer, E., \emph{{Positive Linear Maps of Operator Algebras}}, Springer,
  Berlin, Heidelberg, 2013.

\bibitem{Szczygielski23}
Szczygielski, K., \emph{J. Phys. A: Math. Theor.} \textbf{56} (2023), 485202.

\bibitem{TT83}
Takasaki, T. and Tomiyama, J., \emph{Math. Z} \textbf{184} (1983), 101.

\bibitem{Tomiyama83}
Tomiyama, J., \emph{Proc. Am. Math. Soc.} \textbf{88} (1983), 635.

\bibitem{Tomiyama85}
Tomiyama, J., \emph{Linear Algebra Appl.} \textbf{69} (1985), 169.

\bibitem{vE24_decomp_findim}
vom Ende, F., \emph{Open Syst. Inf. Dyn.} \textbf{31} (2024), 2450007.

\bibitem{Watrous18}
Watrous, J., \emph{{The Theory of Quantum Information}}, Cambridge University
  Press, Cambridge, 2018.

\bibitem{Wolf08b}
Wolf, M., Eisert, J., Cubitt, T., and Cirac, J., \emph{Phys. Rev. Lett.}
  \textbf{101} (2008), 150402.

\bibitem{Woronowicz76}
Woronowicz, S., \emph{Rep. Math. Phys.} \textbf{10} (1976), 165.

\bibitem{Yang16}
Yang, Y., Leung, D., and Tang, W.-S., \emph{Linear Algebra Appl.} \textbf{503}
  (2016), 233.

\end{thebibliography}
\end{document}